\theoremstyle{plain}
  \newtheorem{thm}{Theorem}[section]
  \newtheorem{cor}[thm]{Corollary}
  \newtheorem{lem}[thm]{Lemma}
  \newtheorem{prop}[thm]{Proposition}
\theoremstyle{definition}
  \newtheorem{exmp}{Example}
\theoremstyle{remark}
\theoremstyle{question}
  \newtheorem{ques}{Question}
\DeclareMathOperator\bdim{dim_B}
\def\R{\mathbb{R}}
\def\D{\mathcal{D}}
\def\A{\mathcal{A}}
\def\B{\mathcal{B}}
\def\V{\mathcal{V}}
\def\Z{\mathbb{Z}}
\def\wQ{\widetilde{Q}}
\def\C{\mathscr{C}}
\begin{document}
\title[Gap sequences and Topological properties of Bedford-McMullen sets]{Gap sequences and Topological properties of Bedford-McMullen sets}
\author{Zhen Liang}
\address{Department of Statistics,  Nanjing Audit University, Nanjing 211815, China}
\email{zhenliang@nau.edu.cn}
\author{Jun Jie Miao}
\address{School of Mathematical Sciences, East China Normal University, Shanghai 200241,  China}

\email{jjmiao@math.ecnu.edu.cn}
\thanks{The research of Miao is partially supported by Shanghai Key Laboratory of PMMP (18dz2271000).}
\author{Huo-Jun Ruan}
\address{School of Mathematical Sciences, Zhejiang University, Hangzhou 310027, China}
\email{ruanhj@zju.edu.cn}
\thanks{The research of Ruan is partially supported by NSFC grant 11771391, ZJNSF grant LY22A010023 and the Fundamental Research Funds
for the Central Universities of China grant 2021FZZX001-01.}


\keywords{Bedford-McMullen set, fractal square, gap sequence, connected component}

\begin{abstract}
In this paper, we study the topological properties and the gap sequences of Bedford-McMullen sets. First, we introduce a topological condition, the component separation condition (CSC), and a geometric condition, the  exponential rate condition (ERC). Then we prove that the CSC implies the ERC, and that both of them are sufficient conditions for obtaining the asymptotic estimate of gap sequences. We also  explore topological properties of Bedford-McMullen sets and prove that all normal Bedford-McMullen sets with infinitely many connected components satisfy the CSC, from which we obtain the asymptotic estimate of the gap sequences of  Bedford-McMullen sets without any restrictions.  Finally, we apply our result to Lipschitz equivalence.
\end{abstract}


\maketitle

\section{Introduction }

\subsection{Cardinality of connected components of $\delta$-neighbourhood}
Let $A$ be a nonempty compact subset of $\R^d$. Given a real number $\delta >0$, we define $A_\delta:=\{y\in \R^d : |x-y|\leq \delta \textit{ for some } x\in A\}$ for the $\delta$-\textit{neighbourhood} (or $\delta$-\textit{fattening}) of the set $A$. Let $\mathscr{C}(B)$ denote the set of all connected components of a set $B$, and denote by $\# E$ the cardinality of a set $E$. Then $\#\mathscr{C}(A_\delta)$ is finite for all $\delta>0$ and is decreasing with respect to $\delta$.

It is clear that if $A$ is connected or contains only finitely many connected components, then there exists $\delta_0>0$, such that $\#\mathscr{C}(A_\delta)=\#\mathscr{C}(A)$ for all $0<\delta<\delta_0$. Things become quite different if $A$ contains infinitely many connected components. In this case, it is natural to study the \emph{rate} at which $\# \mathscr{C}(A_\delta)$ tends to infinity as $\delta$ goes to $0$. More precisely, it is natural to study whether there exists a constant $\gamma>0$ satisfying
\[
  \#\mathscr{C}(A_\delta) \asymp \delta^{-\gamma} \quad (\delta\to 0),
\]
where for two positive functions $f_1$ and $f_2$ on $(0,+\infty)$, we use $f_1(\delta)\asymp f_2(\delta) \; (\delta\to 0)$, or simply $f_1(\delta)\asymp f_2(\delta)$, to mean that there exist $\delta_0>0$ and $c>0$, such that $c^{-1}f_2(\delta)\leq f_1(\delta)\leq c f_2(\delta)$ for all $0<\delta<\delta_0$.

In this paper, we completely characterize the rate at which $\# \mathscr{C}(A_\delta)$ tends to infinity in the case that $A$ is a  Bedford-McMullen set.

\subsection{Definition of the gap sequence and relevant works}

It is quite natural to define gap sequences in the one-dimensional case. Let $A$ be a compact subset of $\mathbb{R}$. An open set $(a,b)$ is called a \emph{gap} of $A$ if $[a,b]\cap A=\{a,b\}$. We list all lengths of gaps of $A$ in non-increasing orders, and call this (finite or infinite) sequence \emph{the gap sequence} of $A$.  For example, the gap sequence of the classical Cantor middle-third set is
\begin{equation*}
\frac{1}{3},\ \ \frac{1}{3^2},\ \ \frac{1}{3^2}\ \dots,\ \underbrace{\frac{1}{3^k}\ \ ,\dots,\ \ \frac{1}{3^k}}_{2^{k-1}},\ \dots.
\end{equation*}

Gap sequences have been applied to explore the geometric properties and dimensions of fractals. In particular, it is often related to the upper bounds of box dimensions. For example, it  was applied to investigate the box dimension of cut-out sets by Besicovitch and Taylor~\cite{BesTa54}.  We refer readers to~\cite{Falco95,LapMa95,LapPo93,XioWu09} for more applications of gap sequences.

In 2008, Rao, Ruan and Yang \cite{RaRuY08} extended the notion of gap sequences to general Euclidean spaces. Let $A$ be a compact subset of $\mathbb{R}^d$. Given $\delta>0$, two points $x,y\in A$ are said to be $\delta$-equivalent (in $A$)  if there exists a sequence $\{x_1=x,x_2,\ldots,x_{\ell-1},x_{\ell}=y\}\subset A$ such that $|x_{i+1}-x_i|\leq \delta$ for all $1\leq i\leq \ell-1$. We denote by $h_A(\delta)$, or $h(\delta)$ for short, the number of $\delta$-equivalent classes of $A$.
Let $\{\delta_{j}\}_{j\geq1}$ be the discontinuity points (or jump points) of $h_A(\delta)$ in decreasing order. Obviously $h_A(\delta)$ equals $1$ on $[\delta_1,+\infty)$, and is constant on $[\delta_{j+1},\delta_j)$ for $j\geq 1$. Let $h_A(\delta_j^-)$ be the left limit of $h_A(\delta)$ at $\delta_j$, i.e., $h_A(\delta_j^-)=\lim_{\delta\to \delta_{j}^{-}}h_A(\delta)$.
We call $m_{j}=h_A(\delta_j^-)-h_A(\delta_{j})$ the \textit{multiplicity} of $\delta_{j}$, and define the \emph{gap sequence} of $A$, denoted by $\{g_k(A)\}$, to be the sequence:
\begin{equation}\label{eq:gap-def}
  \underbrace{\delta_{1},\dots,\delta_{1}}_{m_{1}},\underbrace{\delta_{2},\dots,
  \delta_{2}}_{m_{2}},\dots,\underbrace{\delta_{j},\dots,\delta_{j}}_{m_{j}},\dots.
\end{equation}

It is easy to check that the definition of gap sequences in \cite{RaRuY08} coincides with the natural definition in the one-dimensional case. Rao, Ruan and Yang also proved the following interesting property: if two compact subsets $E$ and $F$ of $\mathbb{R}^d$ are Lipschitz equivalent, then their gap sequences $\{g_k(E)\}$ and $\{g_k(F)\}$ are \emph{comparable}, i.e., there exists $c>0$ such  that $c^{-1}<g_k(E)/g_k(F)<c$ for all $k$. Moreover, they proved that under certain constraints, the upper box dimension of a compact set $E$ is determined by its gap sequence:
\begin{equation}\label{Bdim}
{\overline{\dim}}_{\mathrm B} E= \limsup_{k\to\infty}\frac{\log k}{-\log g_k(E)},
\end{equation}
while this result was proved by  Tricot  in the one-dimensional case~ \cite{Tricot}.

Recently, there have been several works on estimating gap sequences. Most of them focus on fractal sets with the totally disconnected condition. Deng, Wang and Xi \cite{DeWaX15} studied the gap sequences of $C^{1+\alpha}$ self-conformal sets satisfying the strong separation condition, and they proved that \eqref{Bdim} holds for box dimension. Deng and Xi \cite{DeXi16} studied the gap sequences of graph-directed sets satisfying the strong separation condition. Rao, Ruan and Yang studied the gap sequences of totally disconnected Bedford-McMullen sets with certain constraints, see Example~3.3 in \cite{RaRuY08}. Miao, Xi and Xiong \cite{MXX} studied the gap sequence of a totally disconnected Bedford-McMullen set $E$ without any constraints, and showed that
\[
  g_k(E)\asymp k^{-{1/d}},
\]
where $d$ depends on how many empty rows there are in the geometric construction of $E$. See section~2 for the definition of empty row. This provides counterexamples to~\eqref{Bdim} in the general self-affine setting.
Here for two positive functions $f_1$ and $f_2$ on $\mathbb{Z}^+$, we use $f_1(k)\asymp f_2(k)$ to mean that there exists
$c>0$, such that $c^{-1}f_2(k)\leq f_1(k)\leq cf_2(k)$ for all $k\in \mathbb{Z}^+$.

More recently, Liang and Ruan~\cite{LiRu19} studied the gap sequences of fractal squares. In their setting, all possible nontrivial connected components of fractal squares are parallel to either $x$-axis or $y$-axis.

Inspired by~\cite{LiRu19, MXX, RaRuY08}, the following question arises quite naturally: \emph{Can we characterize the asymptotic estimate of the gap sequences of Bedford-McMullen sets without any restrictions?}

From \cite{DeWaX15,MXX}, we have the following relationship between $h_E(\delta)$ and the gap sequences:
\begin{equation}\label{eq:rel-h-gap}
   h_E(\delta)\asymp \delta^{-\gamma} \quad (\delta\to 0)  \iff g_k(E) \asymp k^{-1/\gamma},
\end{equation}
where $\gamma>0$ is a constant, see Proposition~\ref{prop1} for details. Thus, in order to obtain the asymptotic behaviour of gap sequences, it suffices to know the asymptotic behaviour of $h_E(\delta)$. Using the techniques in \cite{MXX} and also some ideas in \cite{Xiao}, we explore local topological structure and metric structure of Bedford-McMullen sets. As a result, we finally obtain the asymptotic estimate of $h_E(\delta)$ and thus that of $\{g_k(E)\}$. Noticing that by definition, $\#\mathscr{C}(A_{\delta/2})=h_A(\delta)$, we also obtain the asymptotic estimate of $\#\mathscr{C}(A_{\delta})$.

Bedford-McMullen sets are classical self-affine sets. They often serve as a testing ground for questions, conjectures or counterexamples. Originally, these sets were studied independently by  Bedford~\cite{Bedfo84} and McMullen~\cite{McMul84}.  There are many interesting works to study the dimensions and Lipschitz equivalence of these sets and their varieties. We refer the reader to \cite{Baran07, FFK, Fraser20,FraOl11,KenPe96,King95,LalGa92,LiLiMi13,Olsen11,Peres94, RaYaZh20} and references therein.

The paper is organized as follows. In section~2, we recall the definition of Bedford-McMullen sets and present the main result. We also introduce two conditions: the component separation condition (CSC for short) and the exponential rate condition (ERC for short). In section~3, we prove that both the CSC and the ERC imply the certain asymptotic behaviour of gap sequences.  The main result is proved in section~4. In section~5, we apply our main result to Lipschitz equivalence of Bedford-McMullen sets.  Finally, we give further remarks in  section~6.

\section{Definitions, notations and main results}
\setcounter{equation}{0}

\subsection{Definitions and notation}\label{sec_BM}
Fix integers $m$, $n$ and $N$ such that $n\geq m\geq 2$  and $1< N\leq mn$.
Let $\mathcal{D}$ be a subset of $\{0,\dots,n-1\}\times
\{0,\dots,m-1\}$ with $\#\D=N$. Write
$$
M=\#\{j\colon (i,j)\in \mathcal{D} \textrm{ for some } i\}.
$$
For each $w\in \mathcal{D}$, we define an
affine transformation $\Psi_w$ on~$\R^2$ by
\begin{equation}\label{eq:Sk}
  \Psi_w(x)=T(x+w),
\end{equation}
where $T=\operatorname{diag}(n^{-1},m^{-1})$. Then the family
$\{\Psi_w\}_{w\in\mathcal{D}}$ forms a self-affine \emph{iterated function system} (IFS), see Figure~\ref{McMfig}. According
to Hutchinson~\cite{Hutch81}, there exists a unique nonempty compact subset  $E=E(n,m,\D)$ of $\R^2$,  called
a \emph{Bedford-McMullen set} or  a \emph{Bedford-McMullen carpet}  \cite{Bedfo84,McMul84}, such that
$$
  E=\bigcup_{w\in \mathcal{D}}\Psi_w(E).
$$
In the special case where $n=m$, the Bedford-McMullen set $E$ is a self-similar set, which we term a \emph{fractal square}, or a \emph{generalized Sierpinski carpet}, denoted  by $F(n,\D)$. See~\cite{LaLuRa13,LiRu19,Roin10, RuWa17, XiXi10} for various studies on fractal squares.

  \begin{figure}[tbhp]
  \subfigure{
    \begin{minipage}[t]{0.3\linewidth}
      \begin{center}
           {\includegraphics[height=4.0cm]{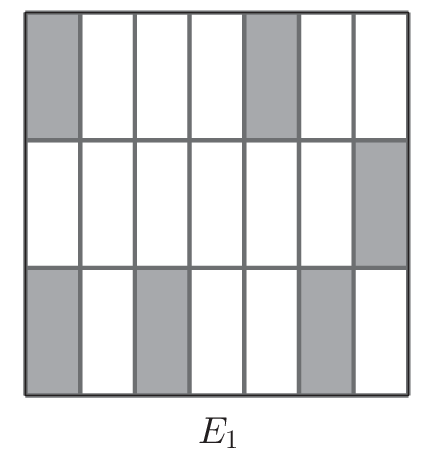}}
      \end{center}
  \end{minipage}
  }
  \subfigure{
    \begin{minipage}[t]{0.3\linewidth}
       \begin{center}
           {\includegraphics[height=4.0cm]{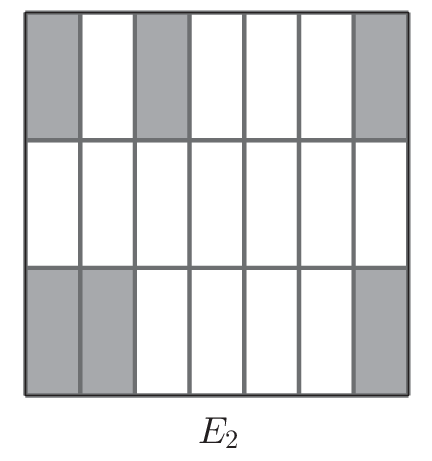}}
       \end{center}
    \end{minipage}
  }
  \subfigure{
    \begin{minipage}[t]{0.3\linewidth}
       \begin{center}
           {\includegraphics[height=4.0cm]{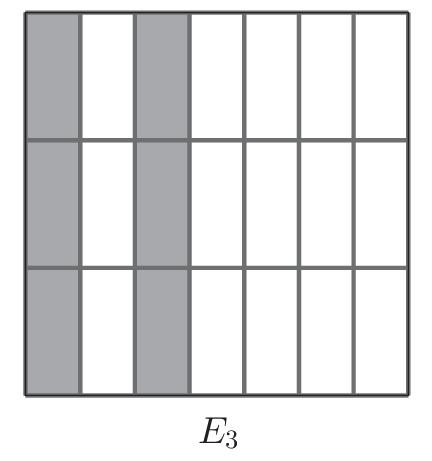}}
       \end{center}
    \end{minipage}
  }
  \caption{{$M=3$ in $E_1$ and $E_3$, $M=2$ in $E_2$, where $n=7$, $m=3$, $N=6$.}}
  \label{McMfig}
 \end{figure}

As stated in \cite{Bedfo84,McMul84}, the box dimension of the
Bedford-McMullen set~$E$ is
\begin{equation}\label{eq:DboxFormula}
  \bdim E=\frac{\log N-\log M}{\log n}+ \frac{\log M}{\log m}  .
\end{equation}
In the case that $n=m$, the dimension formula simplifies to
\[
\bdim E=\frac{\log N}{\log n},
\]
which gives the box dimension formula for fractal squares.
The formula for $n>m$ indicates that the box dimension depends not only on $N$ but also on $M$. Thus,
moving a rectangle to an empty row (or creating a new empty row) will change the dimension,
which is different to self-similar construction. Here, given $j\in \{0,1,\ldots,m-1\}$, the $j$-th row is called \emph{empty} if $\#\{i: (i,j)\in \D\}=0$.


There is a special class of Bedford-McMullen sets whose geometric structure is different to others. Let $E$ be a Bedford-McMullen set. We say $E$ is \emph{simple} if one of the following properties is satisfied:
\begin{itemize}
  \item there exists $\A\subset\{0,1,\ldots,n-1\}$ such that $\mathcal{D}= \A\times\{0,1,\ldots,m-1\}$;
  \item there exists $\B\subset \{0,1,\ldots,m-1\}$ such that $\mathcal{D}= \{0,1,\ldots,n-1\}\times \B $.
\end{itemize}
We remark that in the case that $E$ is simple, it has uniform fibres, hence, its Hausdorff and box dimensions are
equal. See section~5 for the definition of uniform fibres.
We say $E$ is \emph{normal} if $E$ is not simple.
The set $E_3$ in Figure~\ref{McMfig} is simple and satisfies the first property in the definition.


\subsection{The CSC condition and the ERC condition}

Let $\Sigma^{0}=\{\emptyset\}$.
For $k=1,2, \ldots$, we write $\Sigma^{k} = \{ w_{1} w_2 \cdots
w_{k}\colon  w_{j} \in \mathcal{D}, j=1,2,\ldots,k \}$. Elements in $\Sigma^k$ are called \emph{words with length $k$}.

Let $\{\Psi_w\}_{w\in\mathcal{D}}$ be the self-affine IFS as in~\eqref{eq:Sk}.
For each $w=w_1\cdots w_k\in\Sigma^k$, we define
\[
  \Psi_w=\Psi_{w_1}\circ\cdots\circ \Psi_{w_k}.
\]
Denote the unit square $[0,1]^2$ by $Q_0$. For each $k\in \Z^+$, we write
\begin{eqnarray*}
Q_{k}=\bigcup_{w\in \Sigma^k} \Psi_w(Q_0),  \qquad\qquad \wQ_k=\bigcup_{(i,j)\in \{-1,0,1\}^{2}} \Big(Q_k+(i,j)\Big).
\end{eqnarray*}
 See Figure~\ref{QXi1} for an example.
\begin{figure}[tbhp]
  \subfigure{
    \begin{minipage}[t]{0.45\linewidth}
      \begin{center}
           {\includegraphics[height=5.0cm]{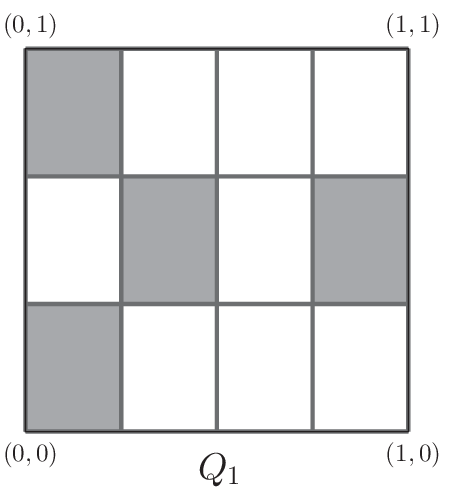}}
      \end{center}
  \end{minipage}
  }
  \subfigure{
    \begin{minipage}[t]{0.45\linewidth}
       \begin{center}
           {\includegraphics[height=5.0cm]{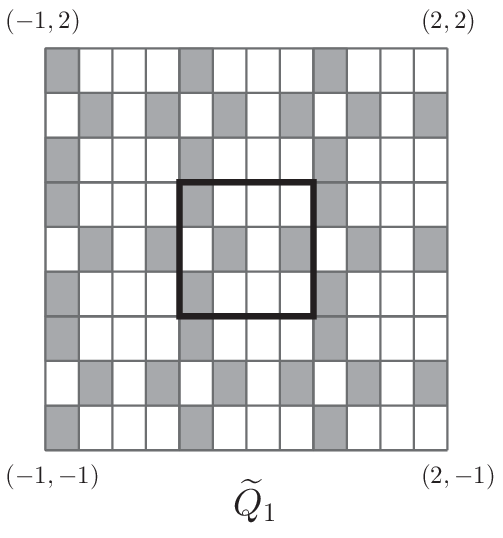}}
       \end{center}
    \end{minipage}
  }
  \caption{{$Q_1$ and $\wQ_1$ for $n=4$, $m=3$ and $N=4$.}}
  \label{QXi1}
\end{figure}

Next, we introduce two conditions for studying gap sequences. One concerns the topological structure, and the other concerns the geometric structure of fractal sets.
Both are important for the estimation of gap sequences.

We say $E$ satisfies the \emph{component separation condition} (or \emph{CSC} for short), if there exist $k_0\in \Z^+$ and a connected component $C$ of $Q_{k_0}$ such that $C$ is also a connected component of $\wQ_{k_0}$, i.e., for every connected component $C'\in \wQ_{k_0}$, either $C'=C$ or $C'\cap C=\emptyset$.

We say $E$ satisfies the \emph{exponential rate condition} (or \emph{ERC} for short) if
\begin{equation}\label{eqiv}
  \#\mathscr{C}(Q_k)\asymp N^k.
\end{equation}
Notice that $\#\C(E)\geq \#\C(Q_k)$ for all $k\geq 1$. Thus, if $E$ satisfies the ERC, then $E$ contains infinitely many connected components.

For the  Bedford-McMullen  set in Figure~\ref{QXi1}, we have $\#\C(Q_k)=2\cdot 4^{k-1}$ for all $k\geq 1$ so that it satisfies the ERC.
There exist examples which do not satisfy the ERC. For example, $\#\mathscr{C}(Q_k)=2^k$ for the set $E_3$ in Figure~\ref{McMfig}; $\#\mathscr{C}(Q_k)=2$ for  the set $E$  in Example~\ref{RX}.

\subsection{Results and examples}
It is not difficult to obtain the estimate of gap sequences of  Bedford-McMullen sets in the simple case. For the normal case, we need the following result.

\begin{thm}\label{thm1}
Let $E$ be a Bedford-McMullen set.
\begin{enumerate}

\item If $E$ satisfies the component separation condition, then $E$ satisfies the exponential rate condition.

\item If $E$ satisfies the exponential rate condition,  then the gap sequence satisfies
\begin{equation}\label{eq:gap-McMullen}
g_k(E)\asymp\begin{cases}
           k^{-1/\bdim E}, & \text{if $M<m$}; \\
           k^{-\log n/\log N}, & \text{if $M=m$}.
         \end{cases}
\end{equation}
\end{enumerate}
\end{thm}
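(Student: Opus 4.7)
The plan is to prove both parts via the self-affine structure of $E$, using~\eqref{eq:rel-h-gap} to convert the gap-sequence estimate into an estimate of $h_E(\delta)$.

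For part (1), the upper bound $\#\C(Q_k)\leq N^k$ is immediate since $Q_k$ is a union of $N^k$ rectangles. For the lower bound I would exploit self-affinity: assume the CSC holds at level $k_0$ with component $C\in\C(Q_{k_0})\cap\C(\wQ_{k_0})$. For every $j\geq 0$ and every $w\in\Sigma^j$, I claim $\Psi_w(C)$ is a component of $Q_{k_0+j}$. The key observation is that any level-$(k_0+j)$ rectangle meeting $\Psi_w(C)$ is contained in $\Psi_w(\wQ_{k_0})$, because such a rectangle must lie in $\Psi_{w'}(Q_{k_0})$ for some $w'\in\Sigma^j$ whose level-$j$ cell is one of the nine translates of $\Psi_w(Q_0)$ that make up $\Psi_w(\wQ_0)$; applying $\Psi_w^{-1}$ and using that $C$ is isolated in $\wQ_{k_0}$ rules out any enlargement of $\Psi_w(C)$ within $Q_{k_0+j}$. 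This gives $\#\C(Q_{k_0+j})\geq N^j$, hence $\#\C(Q_k)\asymp N^k$.

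For part (2), by~\eqref{eq:rel-h-gap} and the identity $\#\C(A_{\delta/2})=h_A(\delta)$, it suffices to prove $h_E(\delta)\asymp \delta^{-\gamma}$ with the asserted exponent. The lower bound is extracted from $\#\C(Q_k)$ at a level where distinct components of $Q_k$ are separated by strictly more than~$\delta$: the smallest inter-component gap in $Q_k$ is the horizontal empty-cell width $n^{-k}$, so $\delta<n^{-k}$ gives $h_E(\delta)\geq \#\C(Q_k)\asymp N^k$. When $M=m$ the choice $\delta\asymp n^{-k}$ then yields the matching bound $\delta^{-\log N/\log n}$; when $M<m$, I refine further by iterating an extra $\lfloor k\log m/\log n\rfloor$ horizontal steps to construct the McMullen ``approximate squares'' of aspect ratio $\asymp 1$, which together with the ERC produces the exponent $\bdim E$. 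For the upper bound, I would group $E$-points into $\delta$-equivalence classes using the approximate-square structure; within each such square, the nontriviality of every row ($M=m$) or the combinatorial balance between horizontal empty columns and vertical empty rows ($M<m$) lets $E$-points be chained at scale $\delta$ along paths in $E$ supplied by the ERC.

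The main obstacle is the upper bound in part (2). The lower bound only requires counting components of $Q_k$, but the upper bound must verify that the rectangles inside an approximate square actually contain $E$-points that can be connected by $\delta$-chains \emph{within} $E$. This forces the local topological analysis of~\cite{MXX} and~\cite{Xiao} to be adapted to the setting where $E$ itself admits nontrivial connected components: the chaining cannot proceed point-by-point and must traverse whole subcomponents of~$E$. The dichotomy between the exponents $\log N/\log n$ for $M=m$ and $\bdim E$ for $M<m$ emerges precisely at this step, since the vertical and horizontal scales must be balanced differently depending on whether empty rows are present.
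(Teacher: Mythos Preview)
Your argument for part~(1) is correct and essentially coincides with the paper's Lemma~\ref{CSC-ERC}; you have merely supplied the justification that each $\Psi_v(C)$ is a component of $Q_{k_0+j}$, which the paper asserts without further comment.

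For part~(2) you have the right skeleton---estimate $h_E(n^{-k})$ above and below and invoke~\eqref{eq:rel-h-gap}---but you have inverted where the ERC enters and where the difficulty lies. The \emph{upper} bound never uses the ERC and is unchanged from the totally disconnected case (the paper says so explicitly just before Lemma~\ref{lem:ERC-gap}): when $M=m$ one slices each level-$k$ elementary rectangle into horizontal strips of height $<n^{-k}$, each of which meets $E$ because every row is occupied, and chains one $E$-point per strip; when $M<m$ each approximate square already has max-metric diameter $\le n^{-k}$, so $n^{-k}$-equivalence within one is automatic. No ``paths in $E$ supplied by the ERC'' appear, the chaining \emph{does} proceed point-by-point, and the techniques of~\cite{Xiao} play no role in Theorem~\ref{thm1}---they enter only in Section~\ref{sec_tpg} to establish the CSC for Theorem~\ref{Thm2}. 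The ERC is used solely for the \emph{lower} bound, where it replaces the strong separation hypothesis of~\cite{MXX}: distinct components of $Q_{k-1}$ are $n^{-k+1}$-separated, so $h_E(n^{-k})\ge\#\C(Q_{k-1})\gtrsim N^{k-1}$; for $M<m$ one then applies a further \emph{vertical} partition $\V_{\ell-1}(R)$ of each level-$(k-1)$ rectangle (the number of extra levels is $\ell-k\approx k(\log n/\log m-1)$, not the ``$\lfloor k\log m/\log n\rfloor$ horizontal steps'' you wrote) whose $\gtrsim M^{\ell-k-1}$ components are $m^{-\ell+1}$-separated, yielding $h_E(n^{-k})\gtrsim N^{k-1}M^{\ell-k-1}$. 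Redirect your effort from the upper to the lower bound and drop the appeal to~\cite{Xiao}.
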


By using this result, we can obtain our main result as follows.

\begin{thm}\label{Thm2}
Let $E$ be a Bedford-McMullen set  with infinitely many connected components.
\begin{enumerate}
  \item If $E$ is simple, then $g_k(E)\asymp k^{-1/(\bdim E-1)}.$
  \item If $E$ is normal, then $E$ satisfies the component separation condition and \eqref{eq:gap-McMullen} holds.
\end{enumerate}
\end{thm}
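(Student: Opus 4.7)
For case (1), we reduce to a one-dimensional gap-sequence estimate via the product structure. Suppose $\D = \A \times \{0, 1, \ldots, m-1\}$ (the case $\D = \{0,1,\ldots,n-1\} \times \B$ is symmetric). Then $E = K \times [0,1]$, where $K \subset [0,1]$ is the self-similar set generated by the IFS $\{x \mapsto (x+a)/n : a \in \A\}$, which satisfies the open set condition with open set $(0,1)$. Since $E$ has infinitely many connected components, so does $K$. Because one can always move freely in the vertical direction within $E$, two points of $E$ are $\delta$-equivalent iff their $x$-coordinates are $\delta$-equivalent in $K$; hence $h_E(\delta) = h_K(\delta)$ and therefore $g_k(E) \asymp g_k(K)$. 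By the classical Tricot--Besicovitch--Taylor result for self-similar sets on the line with OSC, $g_k(K) \asymp k^{-1/\bdim K}$. Since here $M = m$ and $N = m\,\#\A$, the Bedford--McMullen formula gives $\bdim E = 1 + \log\#\A/\log n = 1 + \bdim K$, which yields $g_k(E) \asymp k^{-1/(\bdim E - 1)}$.

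For case (2), Theorem~\ref{thm1} reduces the problem to verifying the CSC, since the CSC implies the ERC, which in turn yields the asymptotic formula \eqref{eq:gap-McMullen}. The real task is then to prove: every nonlinear Bedford--McMullen set with infinitely many connected components admits some level $k_0$ and a connected component $C$ of $Q_{k_0}$ that is also a connected component of $\wQ_{k_0}$. The strategy is to exploit the non-product structure of $\D$: there must exist two rows (or two columns) of $\D$ whose occupancy patterns differ, and at higher iteration levels these asymmetries combine to produce ``staircase'' gaps. I would carry out a case analysis according to whether the nonlinearity is horizontal or vertical and whether $M = m$ or $M < m$, in each case exhibiting, at some level $k_0$, a concrete word $w \in \Sigma^{k_0}$ whose associated level-$k_0$ cell contains a component $C$ not adjacent to any occupied cell in the eight translates $Q_{k_0}+(i,j)$ surrounding the unit square. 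The hypothesis of infinitely many components serves to eliminate degenerate configurations where every level-$k$ component is forced to span the unit square.

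The main obstacle is precisely this last step: translating the qualitative non-product structure of $\D$ into the quantitative geometric statement that some $Q_{k_0}$ contains an isolated component. The subtlety is that a component $C$ of $Q_{k_0}$ may touch $\partial [0,1]^2$ without being connected to a neighbouring translate $Q_{k_0}+(i,j)$, provided the matching boundary cell of that translate is empty; enumerating these interactions across all nonlinear configurations is the heart of the combinatorial argument. I expect the proof to proceed by a constructive choice of $k_0$ that amplifies a single row-asymmetry or column-asymmetry in $\D$ into a genuine ``island'' of cells surrounded by empty cells on all four sides, with the hypothesis of infinitely many components invoked to rule out situations where every candidate island is always forced to join the boundary.
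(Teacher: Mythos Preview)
Your treatment of case~(1) is essentially the paper's argument: reduce $E=K\times[0,1]$ to the one-dimensional factor $K$, observe $h_E(\delta)=h_K(\delta)$, and use $\bdim E=1+\bdim K$. The only quibble is the citation: the asymptotic $g_k(K)\asymp k^{-1/\bdim K}$ for self-similar subsets of the line under OSC is not literally ``Tricot--Besicovitch--Taylor''; the paper instead uses the Lipschitz-equivalence result of Rao--Ruan--Xi to reduce to the model set $\A=\{0,\dots,n_0-1\}$ and then computes the gap sequence explicitly. This is a cosmetic difference.

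Case~(2), however, is not yet a proof: you describe a plan (case-split on the direction of nonlinearity and on $M=m$ versus $M<m$, then construct an explicit ``island'' component at some level $k_0$) and you correctly identify the obstacle, but you do not resolve it. The paper's route is genuinely different and avoids the combinatorial enumeration you anticipate. Instead of constructing an isolated component directly, it proceeds by contraposition: it first proves (Lemma~\ref{Tlem}) that if $E$ is not one-sided and \emph{every} component of $E$ meets $\partial Q_0$, then every $Q_k$ is either vertical or horizontal; then (Lemma~\ref{Thm}) that under these hypotheses, nonlinearity forces $\#\C(Q_k)$ to be uniformly bounded, so $E$ has only finitely many components. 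Taking the contrapositive, infinitely many components yields a component $C$ of $E$ disjoint from $\partial Q_0$; a short compactness argument (the components $C_k\supset C$ of $Q_k$ decrease to $C$) then produces a $k_0$ with $C_{k_0}\cap\partial Q_0=\emptyset$, which is exactly the CSC. The one-sided case is handled separately by an easy direct argument (Lemma~\ref{Tlem4}). The key idea you are missing is this indirect passage through ``every component meets the boundary $\Rightarrow$ all $Q_k$ are vertical or horizontal $\Rightarrow$ finitely many components''; it sidesteps the cell-by-cell adjacency analysis entirely.
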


We remark that by \eqref{eq:rel-h-gap} and $\#\mathscr{C}(E_{\delta/2})=h_E(\delta)$, we also obtain the rate at which $\# \mathscr{C}(E_\delta)$ tends to infinity as $\delta$ goes to $0$.

By Theorem~\ref{Thm2}, we immediately obtain the following result on fractal squares,
which generalizes the main result of~\cite{LiRu19}.

\begin{cor}\label{Thm3}
Let $F(n,\D)$ and $F(n,\D')$  be two fractal squares with infinitely many connected components and $\#\D=\#\D'$. Then their gap sequences are comparable if and only if either both $F(n,\D)$ and $F(n,\D')$ are normal, or both $F(n,\D)$ and $F(n,\D')$ are simple.
\end{cor}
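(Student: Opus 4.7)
The plan is to specialize Theorem~\ref{Thm2} to the fractal square setting ($n=m$) and compare the resulting polynomial exponents. The elementary fact I will use throughout is that for positive constants $\gamma_1,\gamma_2>0$, two sequences satisfying $g_k\asymp k^{-\gamma_1}$ and $g_k'\asymp k^{-\gamma_2}$ are comparable if and only if $\gamma_1=\gamma_2$. So the entire argument reduces to reading off the exponent produced by Theorem~\ref{Thm2} in each of the two cases (linear versus nonlinear) and showing they always differ across cases but always agree within a class once $\#\D=\#\D'$ is imposed.

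First I would handle the nonlinear case. If $F(n,\D)$ is a nonlinear fractal square with $\#\D=N$, then $\bdim F=\log N/\log n$. The crucial observation is that when $n=m$ the two subcases of \eqref{eq:gap-McMullen} collapse to a single exponent: indeed $1/\bdim F = \log n/\log N$ in the $M<m$ branch, while the $M=m$ branch already gives $\log n/\log N$. Hence Theorem~\ref{Thm2}(2) yields
\[
g_k\bigl(F(n,\D)\bigr)\asymp k^{-\log n/\log N}.
\]
In particular, any two nonlinear fractal squares $F(n,\D),F(n,\D')$ with $\#\D=\#\D'=N$ produce the same exponent and hence comparable gap sequences.

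Next I would treat the linear case. If $F(n,\D)$ is linear with $\#\D=N$, say $\D=\A\times\{0,\ldots,n-1\}$, then $F=\widetilde\A\times[0,1]$ where $\widetilde\A$ is the self-similar Cantor set generated by $\A$. Thus $\#\A=N/n$, and since $F$ has infinitely many components we must have $2\leq\#\A\leq n-1$. The box dimension is $\bdim F=1+\log(N/n)/\log n$, so $\bdim F-1=\log(N/n)/\log n>0$, and Theorem~\ref{Thm2}(1) gives
\[
g_k\bigl(F(n,\D)\bigr)\asymp k^{-\log n/\log(N/n)}.
\]
Again the exponent depends only on $N$ and $n$, so two linear fractal squares with $\#\D=\#\D'$ have comparable gap sequences. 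The argument is identical in the $\B$-orientation.

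Finally I would combine the two computations. If one of the two squares is linear and the other nonlinear with the common value $\#\D=\#\D'=N$, then the exponents are $\log n/\log(N/n)$ and $\log n/\log N$ respectively. Because $n\geq 2$ forces $0<\log(N/n)<\log N$, these exponents are strictly different, and hence the ratio $g_k(F(n,\D))/g_k(F(n,\D'))$ behaves like a nonzero power of $k$ and is unbounded in one of the two directions; the sequences cannot be comparable. This gives the ``only if'' direction and completes the proof. There is no real obstacle here beyond bookkeeping; the substantive content lives entirely in Theorem~\ref{Thm2}, and what makes the corollary work is simply the coincidence that in the fractal square case ($n=m$) the two subcases of \eqref{eq:gap-McMullen} degenerate into the single exponent $\log n/\log N$, which is intrinsically different from the linear exponent $\log n/\log(N/n)$ whenever $n\geq 2$.
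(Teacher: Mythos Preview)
Your proof is correct and follows essentially the same approach as the paper's: both reduce to Theorem~\ref{Thm2} after observing that $\bdim F(n,\D)=\bdim F(n,\D')=\log N/\log n$ in the fractal-square setting. Your version is simply more explicit than the paper's one-line proof, in particular spelling out why the two branches of \eqref{eq:gap-McMullen} collapse to the single exponent $\log n/\log N$ when $n=m$, and why this differs from the linear exponent $\log n/\log(N/n)$.
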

\begin{proof}
  From the box dimension formula,
  \[
    \bdim F(n,\D)=\bdim F(n,\D')=\frac{\log \#\D}{\log n}.
  \]
  Combining this with Theorem~\ref{Thm2}, the conclusion holds.
\end{proof}

We end this section by giving some examples to illustrate the main result.
\begin{exmp}\label{ex_1}
  Let $E_1$, $E_2$ and $E_3$ be the Bedford-McMullen sets defined in Figure~\ref{McMfig}. Clearly, the box dimensions and gap sequences vary by changing the translations of affine transformations.
\begin{eqnarray*}
\begin{array}{llc}
  \bdim E_1 =1+\frac{\log 2}{\log 7},\qquad &  g_k(E_1)\asymp \big(\frac{1}{k}\big)^{\log 7/\log6}, \qquad& M=m=3; \\
  \bdim E_2 =\frac{\log 3}{\log 7}+\frac{\log 2}{\log 3},  & g_k(E_2)\asymp \big(\frac{1}{k}\big)^{1/\bdim E_2} , & m>M=2 ; \\
  \bdim E_3 =\bdim E_1 , & g_k(E_3)\asymp \big(\frac{1}{k}\big)^{\log 7/\log 2} , & E_3 \textit{ is simple}.
\end{array}
\end{eqnarray*}
\end{exmp}

\begin{exmp}\label{RX}
 There exist Bedford-McMullen sets with finitely many connected components. Let $E$ be  the Bedford-McMullen set defined in Figure~\ref{McM3}. One can check that $E$ contains exactly $2$ connected components. This set is similar to the fractal square constructed by Xiao in Example~2.1 in~\cite{Xiao}. We remark that Xiao's  criteria for finitely many connected components of fractal squares are also applicable to Bedford-McMullen sets.
\end{exmp}

  \begin{figure}[tbhp]
  \subfigure[The initial structure]{
    \begin{minipage}[t]{0.45\linewidth}
      \begin{center}
           {\includegraphics[height=4.0cm]{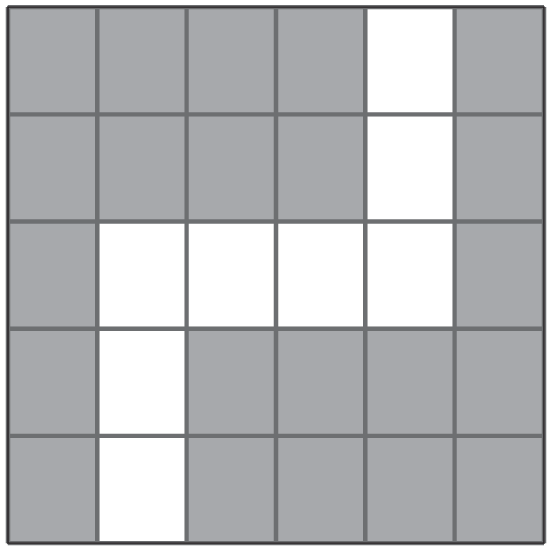}}
      \end{center}
  \end{minipage}
  }
  \subfigure[$E$]{
    \begin{minipage}[t]{0.45\linewidth}
       \begin{center}
           {\includegraphics[height=4.0cm]{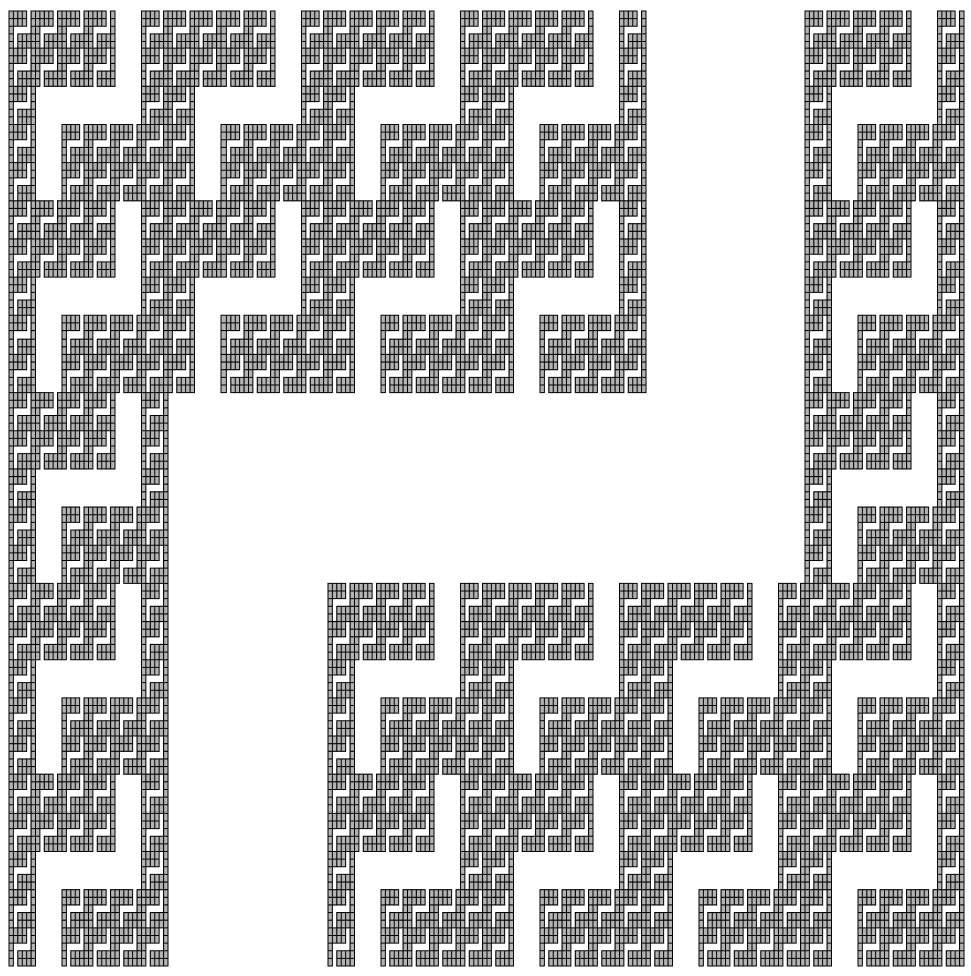}}
       \end{center}
    \end{minipage}
  }
  \caption{A Bedford-McMullen set with $2$ connected components.}
  \label{McM3}
 \end{figure}

\section{ Gap sequences of  Bedford-McMullen  sets}\label{sec3}
\setcounter{equation}{0}

In the rest of the paper, we always use \emph{component} to mean \emph{connected component} for simplicity.

First we show that the CSC implies the ERC.
\begin{lem}\label{CSC-ERC}
  Let $E$ be a Bedford-McMullen set satisfying the component separation condition. Then $\#\mathscr{C}(Q_k) \asymp N^k$.
\end{lem}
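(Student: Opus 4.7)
The plan is to prove both directions of $\#\C(Q_k) \asymp N^k$ separately. The upper bound $\#\C(Q_k) \leq N^k$ is immediate, since $Q_k$ is a union of $N^k$ level-$k$ cells $\{\Psi_w(Q_0)\}_{w \in \Sigma^k}$ and every component of $Q_k$ is a union of such cells. For the lower bound, I will use the CSC to manufacture a distinct component of $Q_{k+k_0}$ for each word $w \in \Sigma^k$.

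Fix once and for all the pair $(k_0, C)$ supplied by the CSC, so $C$ is a component of both $Q_{k_0}$ and $\wQ_{k_0}$. Because $\wQ_{k_0}$ is a finite union of closed rectangles and therefore has only finitely many components, $\varepsilon := \operatorname{dist}(C, \wQ_{k_0} \setminus C) > 0$. For each $w \in \Sigma^k$ I claim that $\Psi_w(C)$ is a component of $Q_{k+k_0}$: connectedness and the inclusion $\Psi_w(C) \subset \Psi_w(Q_{k_0}) \subset Q_{k+k_0}$ are immediate, so the crux is to show $\Psi_w(C)$ is separated from $Q_{k+k_0} \setminus \Psi_w(C)$ by a positive distance.

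I will split this separation estimate by location. Because every $\Psi_w$ with $w \in \Sigma^k$ has linear part $T^k = \operatorname{diag}(n^{-k}, m^{-k})$, any neighbouring level-$k$ cell $\Psi_{w'}(Q_0) = \Psi_w(Q_0) + (i/n^k, j/m^k)$ with $(i,j) \in \{-1,0,1\}^2 \setminus \{(0,0)\}$ that is present in $Q_k$ contributes content $\Psi_{w'}(Q_{k_0}) = \Psi_w(Q_{k_0}) + (i/n^k, j/m^k)$, which is precisely a piece of $\Psi_w(\wQ_{k_0})$. Hence the portion of $Q_{k+k_0}$ lying inside the $3\times 3$ block of level-$k$ cells around $\Psi_w(Q_0)$ is contained in $\Psi_w(\wQ_{k_0})$, and the CSC together with the minimum singular value $n^{-k}$ of $T^k$ yields $\operatorname{dist}\!\bigl(\Psi_w(C), \Psi_w(\wQ_{k_0}) \setminus \Psi_w(C)\bigr) \geq n^{-k}\varepsilon$. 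Any point of $Q_{k+k_0}$ lying outside this $3\times 3$ block is at Euclidean distance at least $\min(n^{-k}, m^{-k}) = n^{-k}$ from $\Psi_w(Q_0)$, hence from $\Psi_w(C)$.

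Since the $N^k$ sets $\{\Psi_w(C)\}_{w \in \Sigma^k}$ are pairwise separated (the same $3\times 3$ block analysis applied across pairs of cells gives $\operatorname{dist}(\Psi_w(C), \Psi_{w'}(C)) \geq n^{-k}\varepsilon$ when the cells are adjacent), this produces $\#\C(Q_{k+k_0}) \geq N^k$, i.e.\ $\#\C(Q_\ell) \geq N^{-k_0} N^\ell$ for $\ell \geq k_0$; adjusting the multiplicative constant handles the finitely many small values $\ell < k_0$. I expect the main obstacle to be the careful bookkeeping of the $3\times 3$ block argument, in particular verifying that every portion of $Q_{k+k_0}$ encroaching on a neighbourhood of $\Psi_w(Q_0)$ is captured inside $\Psi_w(\wQ_{k_0})$; this identification rests on the fact that all level-$k$ maps $\Psi_w$ share the same linear part $T^k$, so the contents of adjacent level-$k$ cells are exact translates of $\Psi_w(Q_{k_0})$.
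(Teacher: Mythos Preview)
Your proposal is correct and follows essentially the same approach as the paper: both obtain the upper bound trivially from the cell count, and both obtain the lower bound by observing that the CSC supplies a component $C$ of $Q_{k_0}$ whose images $\Psi_w(C)$, $w\in\Sigma^{k-k_0}$, are distinct components of $Q_k$. The paper simply asserts this last fact in one line, whereas you spell out the underlying metric separation via the $3\times3$ block and the minimum singular value of $T^k$; this extra detail is valid and makes explicit exactly why $C$ being a component of $\wQ_{k_0}$ (and not merely of $Q_{k_0}$) is what guarantees that $\Psi_w(C)$ stays separated from the content of neighbouring level-$k$ cells.
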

\begin{proof}
  For each $k\in\Z^+$, $Q_k$ is the union of $N^k$ rectangles so that $\#\mathscr{C}(Q_k)\leq N^k$.
  On the other hand, by the definition of the component separation condition, there exist $k_0\in \Z^+$ and a component $C$ of $Q_{k_0}$ such that $C$ is also a component of $\widetilde{Q}_{k_0}$. Thus, given a positive integer $k> k_0$, for every $v\in \Sigma^{k-k_0}$, the set $\Psi_v(C)$ is a component of $ Q_{k}$. It follows that $\#\mathscr{C}(Q_k)\geq N^{k-k_0}$.
\end{proof}

In order to prove the second part of Theorem~\ref{thm1}, we need the following fact obtained in  \cite{DeWaX15, MXX}.  We add a proof here for the convenience of readers.

\begin{prop}[\cite{DeWaX15, MXX}]\label{prop1}
Let $A$ be a compact subset of $\R^d$. Assume that $\gamma>0$. Then
\[
  g_k(A) \asymp k^{-1/\gamma} \iff  h_A(\delta)\asymp \delta^{-\gamma}  \quad (\delta\to 0).
\]
\end{prop}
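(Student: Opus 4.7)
The plan is to exploit the fact that $\{g_k\}$ and $h_A$ are essentially inverses of each other as monotone step functions. Set $M_j = m_1+\cdots+m_j$ with $M_0=0$. Tracing through the definition of the gap sequence, one verifies that $g_k=\delta_j$ exactly when $M_{j-1}<k\leq M_j$, while $h_A(\delta_j^-)=1+M_j$ and $h_A(\delta_j)=1+M_{j-1}$. Combining these observations produces the two ``duality'' inequalities
\[
  h_A(g_k)\leq k<h_A(g_k^-), \qquad g_{h_A(\delta)}\leq \delta<g_{h_A(\delta)-1},
\]
valid for all sufficiently small $\delta$ and all sufficiently large $k$. Once this correspondence is in hand, both implications reduce to inverting a power-law asymptotic, and the rest is bookkeeping.

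For the direction $h_A(\delta)\asymp \delta^{-\gamma}\Rightarrow g_k\asymp k^{-1/\gamma}$, I would fix constants $K_1,\delta_0>0$ with $K_1^{-1}\delta^{-\gamma}\leq h_A(\delta)\leq K_1\delta^{-\gamma}$ for $0<\delta<\delta_0$. Applying the upper bound on $h_A$ at arguments just below $g_k$ and combining with $h_A(g_k^-)>k$ yields $k<K_1 g_k^{-\gamma}$, hence $g_k\leq K_1^{1/\gamma}k^{-1/\gamma}$. Applying the lower bound on $h_A$ at $g_k$ together with $h_A(g_k)\leq k$ yields $g_k\geq K_1^{-1/\gamma}k^{-1/\gamma}$.

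For the reverse direction, assume $K_2^{-1}k^{-1/\gamma}\leq g_k\leq K_2k^{-1/\gamma}$ for all $k$. Given small $\delta$, set $k=h_A(\delta)$. The second duality gives $g_k\leq \delta<g_{k-1}$, and plugging in the assumed bounds yields $K_2^{-1}k^{-1/\gamma}\leq \delta<K_2(k-1)^{-1/\gamma}$. Elementary rearrangement (using that $k-1\asymp k$ for $k\geq 2$) then produces constants $K_3$ with $K_3^{-1}\delta^{-\gamma}\leq h_A(\delta)\leq K_3\delta^{-\gamma}$ for all $\delta$ small enough.

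There is no substantial obstacle; the whole statement is a careful verification that two monotone step functions related by inversion have equivalent power-law asymptotics. The only mild annoyance is handling the boundary cases ($\delta\geq \delta_0$ on the one side, or the first few values of $k$ on the other), and the fact that $h_A$ is integer-valued so one must distinguish $k<h_A(g_k^-)$ from $k+1\leq h_A(g_k^-)$; these minor issues are absorbed into the $\asymp$-constants.
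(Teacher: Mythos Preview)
Your proposal is correct and follows essentially the same route as the paper: the paper also derives the key duality $h_A(g_k)\le k<h_A(g_k^-)$ (its equation~(3.3)) and uses it exactly as you do for the direction $h_A(\delta)\asymp\delta^{-\gamma}\Rightarrow g_k\asymp k^{-1/\gamma}$, while for the converse it sandwiches $\delta$ via $\alpha_{k+1}\le\delta<\alpha_k$, which is just your second duality $g_{h_A(\delta)}\le\delta<g_{h_A(\delta)-1}$ in slightly different packaging. The only cosmetic difference is that you state both dual inequalities upfront, whereas the paper derives the first and handles the second implicitly through the choice of $k$.
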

\begin{proof} We remark that both $g_k(A) \asymp k^{-1/\gamma}$ and $h_A(\delta)\asymp \delta^{-\gamma} \; (\delta\to 0)$ imply that $A$ contains infinitely many components. In this proof, we denote $g_k(A)$ by $\alpha_k$ for simplicity.

  ($\Longleftarrow$). Assume that  $h_A(\delta)\asymp \delta^{-\gamma}  \; (\delta\to 0)$.  Then  it is easy to see that there exists $c>0$ such that
   \begin{equation}\label{eq:propMMX-1}
       c^{-1} \delta^{-\gamma}\leq h_A(\delta)\leq c \delta^{-\gamma}
   \end{equation}
   for all $0<\delta\leq \alpha_1$. Let $\{\delta_{j}\}_{j\geq1}$ be the discontinuity points of $h_A(\delta)$ in decreasing order.  Then $h_A(\delta_{j}^-)=h_A(\delta_{j+1})$. Thus,
   $\alpha_k=\delta_j$ for $ h_A(\delta_j)\leq k < h_A(\delta_{j}^-),$
   which implies that
   \begin{equation}\label{eq:gap-prop}
     h_A(\alpha_k) \leq k<h_A(\alpha_k^-),\quad k\geq 1.
   \end{equation}
  Combining this with \eqref{eq:propMMX-1}, we have $k\geq h_A(\alpha_k)\geq c^{-1} \alpha_k^{-\gamma}$ so that
  $\alpha_k\geq  c^{-1/\gamma}k^{-1/\gamma}$ for all integers $k>0$.

  On the other hand, given $\delta\in (0,\alpha_k)$ where $k$ is a fixed positive integer, it follows from \eqref{eq:propMMX-1} and \eqref{eq:gap-prop} that $ c\delta^{-\gamma}>h_A(\delta)\geq h_A(\alpha_k^-)>k$. Thus $\delta <c^{1/\gamma} k^{-1/\gamma}$. Letting $\delta\to \alpha_k^-$, we obtain that  $\alpha_k\leq  c^{1/\gamma}k^{-1/\gamma}$.


  ($\Longrightarrow$). Suppose now that $c^{-1} k^{-1/\gamma}\leq \alpha_k\leq c k^{-1/\gamma}$ for all integers $k>0$, where $c>0$ is a constant independent of $k$.
  For each $0<\delta<\alpha_1$, we choose an integer $k>0$ such that $\alpha_{k+1}\leq \delta<\alpha_k$. From \eqref{eq:gap-prop},
  $$
  h_A(\delta) \leq h_A(\alpha_{k+1}) \leq k+1\leq 2k\leq 2c^\gamma\alpha_k^{-\gamma}<2c^\gamma\delta^{-\gamma}
  $$
  and
  $$
  h_A(\delta)\geq h_A(\alpha_k^-)> k\geq \frac{1}{2}(k+1)\geq \frac{1}{2}c^{-\gamma}\alpha_{k+1}^{-\gamma}\geq  \frac{1}{2}c^{-\gamma}\delta^{-\gamma}.
  $$
  Thus  $h_A(\delta)\asymp \delta^{-\gamma}  \;(\delta\to 0)$  so that the conclusion holds.
\end{proof}

\medskip

Given $k\in \Z^+$, for each $w\in\Sigma^k$, we call $\Psi_w(Q_0)$ an \emph{elementary rectangle} of $Q_k$. Write
\[
  \D^Y=\{j:\, (i,j)\in \D \textrm{ for some } i\}.
\]
Given an elementary rectangle $R=[a,b]\times [c,d]$ of $Q_k$ and a positive integer $\ell\geq k$, we define the \emph{level-$\ell$ vertical partition} of $R$ (w.r.t. the Bedford-McMullen set $E$) to be
\begin{equation}\label{eq:V-ell-def}
  \V_\ell(R)=\left\{[a,b]\times [t,t+m^{-\ell}]:\, c\leq t \leq d-m^{-\ell}, t\in \D^Y_{\ell}\right\},
\end{equation}
where
$
  \D^Y_{\ell}=\{\sum_{j=1}^\ell \varepsilon_j m^{-j}:\, \varepsilon_1,\ldots,\varepsilon_\ell\in \D^Y\}.
$
Then $E\cap R=\bigcup_{A\in \V_\ell(R)} (E\cap A)$, and $E\cap A\not=\emptyset$ for every $A\in \V_\ell(R)$.

The next lemma proves the second part of Theorem~\ref{thm1}, and the upper bound estimation of $h_E(n^{-k})$ in the proof is essentially the same as that in \cite{MXX}.
\begin{lem}\label{lem:ERC-gap}
  Let $E$ be a Bedford-McMullen set satisfying the exponential rate condition. Then \eqref{eq:gap-McMullen} holds.
\end{lem}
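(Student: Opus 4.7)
The plan is to apply Proposition~\ref{prop1} to reduce~\eqref{eq:gap-McMullen} to the two-sided estimate $h_E(\delta)\asymp\delta^{-\gamma}$ with $\gamma=\bdim E$ when $M<m$ and $\gamma=\log N/\log n$ when $M=m$. Since $h_E$ is decreasing in $\delta$, it suffices to prove the asymptotic at the discrete scale $\delta=n^{-k}$ and then sandwich intermediate $\delta\in(n^{-k-1},n^{-k}]$ between $h_E(n^{-k})$ and $h_E(n^{-k-1})$ to transfer the estimate to all $\delta$.

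For the upper bound on $h_E(n^{-k})$, I will follow the strategy of~\cite{MXX}. Choose $\ell_k$ so that $m^{-\ell_k}$ is comparable to $n^{-k}$ (i.e., $m^{-\ell_k}\le n^{-k}<m^{-\ell_k+1}$), and apply the level-$\ell_k$ vertical partition~\eqref{eq:V-ell-def} inside each level-$k$ elementary rectangle. This covers $E$ by a family of $N^k M^{\ell_k-k}$ cells, each of diameter at most a constant multiple of $n^{-k}$. A short computation identifies the count as $\asymp n^{k\bdim E}$ when $M<m$, and as $N^k=n^{k\log N/\log n}$ when $M=m$ (where $\ell_k=k$). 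Since each cell lies entirely in a single $n^{-k}$-equivalence class of $E$ (after shrinking the cell size by a constant factor if needed), the cell count is an upper bound for $h_E(n^{-k})$, yielding the desired estimate.

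For the lower bound the ERC enters crucially. Two distinct components of $Q_k$ are separated by at least $\min\{n^{-k},m^{-k}\}=n^{-k}$, so for $\delta$ slightly less than $n^{-k}$ they yield distinct $\delta$-equivalence classes of $E$. Combined with the ERC, this gives $h_E(\delta)\ge c\,\#\C(Q_k)\asymp N^k$, which already settles the $M=m$ case. For $M<m$, the additional factor $M^{\ell_k-k}$ comes from the vertical projection: since $\pi_y(E)$ is a self-similar Cantor set generated by $M$ contractions of ratio $1/m$, its one-dimensional gap sequence together with Proposition~\ref{prop1} gives $h_{\pi_y(E)}(\epsilon)\asymp\epsilon^{-\log M/\log m}$. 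Inside any level-$k$ rectangle $R$, the projection $\pi_y(E\cap R)=m^{-k}\pi_y(E)+\mathrm{offset}$ therefore contributes $\asymp M^{\ell_k-k}$ classes at scale $n^{-k}$, and since orthogonal projection is $1$-Lipschitz the same lower bound transfers to $E\cap R$. Summing over the components of $Q_k$ yields $h_E(n^{-k})\ge cN^k M^{\ell_k-k}\asymp n^{k\bdim E}$.

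The main obstacle I foresee is the summation step in the $M<m$ lower bound when a component $C\in\C(Q_k)$ contains several vertically-adjacent level-$k$ rectangles: one must verify that the $M^{\ell_k-k}$ vertical sub-classes contributed by a single rectangle $R\subset C$ are not globally fused together by chains in $E\cap C$ travelling between adjacent rectangles. I plan to address this by arguing on $\pi_y(E\cap C)$, which is a union of scaled copies of $\pi_y(E)$ on adjacent $m^{-k}$-intervals: two such copies can be joined only at their common endpoint, so each join merges only $O(1)$ classes. Consequently each component $C$ still contributes at least $r_C\bigl(M^{\ell_k-k}-O(1)\bigr)\gtrsim r_C M^{\ell_k-k}$ classes, where $r_C\ge 1$ is the number of rows occupied by $C$; since $\sum_C r_C\ge\#\C(Q_k)\asymp N^k$, the bound $h_E(n^{-k})\gtrsim N^k M^{\ell_k-k}$ is preserved, and Proposition~\ref{prop1} concludes the argument.
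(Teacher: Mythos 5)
There is a genuine gap in your upper bound for the case $M=m$ (with $n>m$). You assert that the level-$\ell_k$ vertical partition produces $N^kM^{\ell_k-k}$ cells and that this count equals $N^k$ because ``$\ell_k=k$ when $M=m$''. But $\ell_k$ is determined by $m^{-\ell_k}\le n^{-k}<m^{-\ell_k+1}$, so $\ell_k\approx k\log n/\log m>k$ whenever $n>m$, regardless of $M$. When $M=m$ the cell count is therefore $N^k m^{\ell_k-k}\asymp N^k(n/m)^k\asymp n^{k\bdim E}$, and since $\bdim E$ is strictly larger than $\log N/\log n$ when $M=m<n$, your covering only yields $h_E(n^{-k})\lesssim n^{k\bdim E}$, which is exponentially weaker than the required $h_E(n^{-k})\lesssim N^k$ (compare $E_1$ in Figure~\ref{McMfig}: the theorem claims the exponent $\log 6/\log 7$ for $h_{E_1}$, while your covering gives $\bdim E_1=\log 14/\log 7$). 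The missing idea is that when every row is occupied, the vertical partition cells inside a single elementary rectangle $R$ of $Q_k$ do \emph{not} represent distinct classes: one can subdivide $R$ into horizontal slabs of width $n^{-k}$ and height less than $n^{-k}$, each of which meets $E$ precisely because no row is empty, and chain vertically through them; hence all of $E\cap R$ is one $n^{-k}$-equivalence class and $h_E(n^{-k})\le N^k$. This is exactly how the paper treats Case~1, and without it your argument does not close.

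The rest of your proposal is sound. The reduction via Proposition~\ref{prop1}, the upper bound for $M<m$, and the $M=m$ lower bound from the ERC all match the paper. Your lower bound for $M<m$ takes a genuinely different route: the paper picks a single elementary rectangle per component of $Q_{k-1}$ and uses the containment of $E$ in the level-$(\ell-1)$ horizontal strip decomposition (whose distinct components are more than $n^{-k}$ apart) to rule out fusion, whereas you push everything to the projection $\pi_y(E)$, use $1$-Lipschitzness of $\pi_y$ to transfer lower bounds on $h$, and control fusion across vertically adjacent copies by order-convexity of $\delta$-classes in $\mathbb{R}$. Your version is more work but makes explicit a point the paper passes over quickly; just note that the estimate $r_C\bigl(M^{\ell_k-k}-O(1)\bigr)\gtrsim r_CM^{\ell_k-k}$ degenerates when $M^{\ell_k-k}$ is bounded, in which regime the ERC bound $\#\C(Q_k)\gtrsim N^k$ already suffices.
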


\begin{proof}
In~\cite{RaRuY08}, Rao, Ruan and Yang showed  if two compact metric spaces
with infinite gap sequences are Lipschitz equivalent, their gap sequences are comparable. Let $\{\alpha_k\}_{k\ge1}$ and $\{\alpha'_k\}_{k\ge1}$ be the gap sequences of
$E$ with respect to the Euclidean distance and the maximum distance, respectively. Here the maximum distance $\rho(\cdot,\cdot)$ in $\mathbb{R}^2$ is defined as
\[
  \rho((x_1,y_1),(x_2,y_2))=\max\{|x_1-x_2|,|y_1-y_2|\},
\]
where  $(x_1,y_1),(x_2,y_2)\in \R^2.$ Then
\[
\alpha_k\asymp\alpha'_k.
\]
Therefore, in the rest of the proof, we assume that $\mathbb{R}^2$ is the space equipped with the maximum distance.


By Proposition~\ref{prop1},  \eqref{eq:gap-McMullen} is equivalent to $h_E(n^{-k})\asymp n^{\gamma k}$, where
\[\gamma=\begin{cases}
    \bdim E, & \text{if $M<m$}; \\
    \log_n N, & \text{if $M=m$}.
  \end{cases}\]

\textbf{Case~1: $M=m$.} It suffices to  show that
\[
   h_E(n^{-k})\asymp N^k.
\]

First, we prove that $h_E(n^{-k})\le
N^k$ for all $k\in \Z^+$. In order to prove this inequality,  it suffices to prove the following claim: for each elementary rectangle $R$ of $ Q_k$,
any two points in $E\cap R$ are $n^{-k}$-equivalent  in $E$.

For each given  integer $k$, let $\ell$ be the integer such that $m^{-\ell-1}<n^{-k}\le m^{-\ell}$. We split $R$
into $m^{\ell+2-k}$ rectangles with width $n^{-k}$ and height $m^{-\ell-2}$.
Since $M=m$, all the $m^{\ell+2-k}$ rectangles intersect $E$. We pick a point in $E\cap R$ from each small rectangle. Denote these points by $p_i=(x_i,y_i), i=1,\ldots,m^{\ell+2-k}$, where $y_1\leq y_2\leq \cdots \leq y_{m^{\ell+2-k}}$. Then the distances between any two adjacent points are bounded by
\[
  \rho(p_i,p_{i+1})\le\max\{n^{-k},2m^{-\ell-2}\}=n^{-k},
\]
for all $i$. Thus, $p_i$ and $p_j$ are $n^{-k}$-equivalent in $E\cap R$ for all $i$ and $j$.  This clearly implies that any two points in $E\cap R$ are $n^{-k}$-equivalent in $E$.

Next, we give the lower bound of $h_E(n^{-k})$.  Since $E$ satisfies the exponential rate condition, there exists a constant $c_0>0$, such that $\#\C(Q_k)\geq c_0 N^k$ for all $k\in \Z^+$.

For each integer $k\geq 2$, since $Q_{k-1}$ is the union of elementary rectangles with width $n^{-k+1}$ and height
$m^{-k+1}$, for any two distinct components $C,C'$ of $Q_{k-1}$ and any two points $p\in C,p'\in C'$, the distance between $p$ and $p'$ is bounded below by
\[
  \rho(p,p')\geq \min\{n^{-k+1},m^{-k+1}\}=n^{-k+1}>n^{-k}.
\]
Combining this with $E\subset Q_{k-1}$, if two points in $E$ belong to different components of $Q_{k-1}$, then they are not $n^{-k}$-equivalent in $E$.
Thus
\[
  h_E(n^{-k})\geq \#\C(Q_{k-1})\geq c_0N^{k-1}
\]
since every component of $Q_{k-1}$ intersects $E$.

\medskip
\textbf{Case~2: $M<m$.} Since $\bdim E=(\log N-\log M)/\log n+\log M/\log m$, it suffices to show that there exist two positive constants $c_1,c_2$, such that for all $k\in \Z^+$,
\[
  c_1N^{k}M^{\ell-k} \leq h_E(n^{-k}) \leq c_2 N^k M^{\ell-k},
\]
where $\ell$ is the unique integer satisfying $m^{-\ell-1}<n^{-k}\leq m^{-\ell}$.

First, we prove the upper bound estimate.
For each $k\in \Z^+$, there are $N^k$ elementary rectangles of $ Q_k$. For every elementary rectangle $R$ of $ Q_k$, by the definition of $M$, there exist $M^{\ell+1-k}$ small rectangles $R_1,\ldots,R_{M^{\ell+1-k}}$  in $\V_{\ell+1}(R)$,  such that every $R_i$ has width $n^{-k}$ and height
$m^{-\ell-1}$, and $R\cap E=\bigcup_{1\leq i\leq M^{\ell+1-k}} (R_i\cap E)$ .

The number of all such small rectangles is $N^{k} M^{\ell+1-k}$. Moreover,
any two points in the same small rectangle are $n^{-k}$-equivalent. Hence
$$
 h_E(n^{-k})\leq N^{k} M^{\ell+1-k}.
$$

Next, we prove the lower bound estimate. Since $E$ satisfies the exponential rate condition, there exists a constant $c_0>0$, such that $\#\C(Q_k)\geq c_0 N^k$ for all $k\in \Z^+$.

For each $k\geq 2$, there are $N^{k-1}$ elementary rectangles of $Q_{k-1}$. By \eqref{eq:V-ell-def} and the definition of $M$, for every elementary rectangle $R$ of $Q_{k-1}$, there exist $M^{\ell-k}$ rectangles $R_1,\ldots,R_{M^{\ell-k}}$ in $\V_{\ell-1}(R)$,
where $m^{-\ell-1}<n^{-k}\leq m^{-\ell}$, such that every $R_i$ has width $n^{-k+1}$ and height $m^{-\ell+1}$, and $R\cap E=\bigcup_{1\leq i\leq M^{\ell-k}} (R_i\cap E)$. Write $\widetilde{R}=\bigcup_{1\leq i\leq M^{\ell-k}} R_i$. It follows from  $M<m$ that  $\#\C(\widetilde{R})\geq M^{\ell-k-1}$. Meanwhile, if two points $p$ and $p'$ belong to distinct components of $\widetilde{R}$,  their distance is bounded below by
\[
   \rho(p,p')\geq \min\{ n^{-k+1}, m^{-\ell+1}\} >\min\{n^{-k}, m^{-\ell}\}= n^{-k}.
\]

Let $C$ be the component of $Q_{k-1}$ with $R\subset C$. From the construction of Bedford-McMullen set, if two points in $E\cap \widetilde{R}$ belong to distinct components of $\widetilde{R}$, they are not $n^{-k}$-equivalent in $E\cap C$. Thus, noticing that every component of $\widetilde{R}$ intersects $E\cap C$,
\[
  h_{E\cap C}(n^{-k}) \geq \#\C(\widetilde{R})\geq M^{\ell-k-1}.
\]

It is clear that if two points in $E$ belong to distinct components of $Q_{k-1}$, they are not $n^{-k}$-equivalent in $E$. Thus
\[
  h_E(n^{-k}) \geq \big(\#\C(Q_{k-1})\big) \cdot M^{\ell-k-1} \geq c_0 N^{k-1}M^{\ell-k-1}.
\]

This completes the proof of the lemma.
\end{proof}

\begin{proof}[Proof of Theorem~\ref{thm1}]
  The theorem follows from Lemmas~\ref{CSC-ERC} and \ref{lem:ERC-gap}.
\end{proof}

\section{Topological properties of Bedford-McMullen sets} \label{sec_tpg}
\setcounter{equation}{0}

First, we give some definitions. Let $R=[a,b]\times [c,d]$ be a rectangle in $\mathbb{R}^2$. We call $\{a\}\times [c,d]$, $\{b\}\times [c,d]$, $[a,b]\times \{c\}$, and $[a,b]\times \{d\}$ \emph{the left boundary, the right boundary, the bottom boundary, and the top boundary} of $R$, respectively.

Recall that $Q_0=[0,1]^2$ and for each integer $k\geq 1$,
\begin{eqnarray*}
  Q_{k}=\bigcup_{w\in \Sigma^k} \Psi_w(Q_0),  \qquad\qquad \wQ_k=\bigcup_{(i,j)\in \{-1,0,1\}^{2}} \Big(Q_k+(i,j)\Big).
\end{eqnarray*}
Let $C$ be a component of $Q_{k}$, we call $C$ \emph{vertical} if $C$ intersects both the top boundary and the bottom boundary of $Q_0$, i.e.
$$
   C\cap \big([0,1]\times \{0\} \big) \neq \emptyset, \quad \textit{and } \quad C\cap \big( [0,1]\times \{1\}\big) \neq \emptyset.
$$
Similarly, the component $ C$ is called \emph{horizontal} if $C$ intersects both the left boundary and the right boundary of $Q_0$. We call $Q_{k}$ \emph{vertical} (resp. \emph{horizontal}) if  all components  of $Q_{k}$   are \emph{vertical} (resp. \emph{horizontal}).

\medskip

The simple facts listed below are frequently used in our proof, where $E$ is the Bedford-McMullen set corresponding to the self-affine IFS $\{\Psi_w\}_{w\in \D}$.

\begin{enumerate}
\item[(i)] Assume that $A,B$ are two subsets of $\R^2$ with $A\subset B$. Then, given a component $C_A$ of $A$, there exists a component $C_B$ of $B$ such that $C_A\subset C_B$. Conversely, given a component $C_B$ of $B$ with $C_B\cap A\not=\emptyset$, there exists a component $C_A$ of $A$ such that $C_A\subset C_B$.


\item[(ii)] Let $k$ be a positive integer. Then $Q_k$ intersects the top boundary of $Q_0$ if and only if there exists $p\in \{0,1,\ldots,n-1\}$ such that $(p,m-1)\in \mathcal{D}$. Therefore, $Q_k$ intersects the top boundary of $Q_0$ if and only if $Q_1$ intersects the top boundary of $Q_0$. Analogous results hold for other boundaries.

\item[(iii)] Assume that $Q_1$ intersects neither the top boundary nor the right boundary of $Q_0$. Then $Q_1\cap \overline{(\wQ_{1}\backslash Q_1)}=\emptyset$. Consequently, $\{\Psi_w\}_{w\in \mathcal{D}}$ satisfies the \emph{strong separation  condition}, i.e., $\Psi_w(E)\cap \Psi_{u}(E)=\emptyset$ for all distinct $w,u\in \D$. Hence, the set $E$ is totally disconnected.
\item[(iv)] Let $k$ be a positive integer. Assume that $C_1$ and $C_2$ are  a vertical  component and  a horizontal  component of $Q_k$, respectively. Then $C_1=C_2$ since $C_1$ intersects $C_2$ in this case. As a result, if every component of $Q_k$ is either vertical or horizontal, then $Q_k$ is either vertical or horizontal.

\end{enumerate}

We call $E$ \emph{one-sided} if  one of the following conditions is satisfied:
\begin{enumerate}
  \item $\mathcal{D}\subset \{0,1,\ldots,n-1\}\times\{0\}$,
  \item $\mathcal{D}\subset \{0,1,\ldots,n-1\}\times\{m-1\}$,
  \item $\mathcal{D}\subset \{0\}\times\{0,1,\ldots,m-1\}$,
  \item $\mathcal{D}\subset \{n-1\}\times\{0,1,\ldots,m-1\}$.
\end{enumerate}

\medskip

The following topological lemma plays a crucial role in proving Theorem~\ref{Thm2}. We remark that in the proofs of the results of this section, we do not use the fact that $n\geq m$.

\begin{lem}\label{Tlem}
Let $E$ be a Bedford-McMullen set. Assume that $E$ is not one-sided,  and every  component of $E$  intersects the boundary of $Q_0$.
Then for every positive integer $k$, $Q_{k}$ is either vertical or horizontal.
\end{lem}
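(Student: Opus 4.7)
The plan is to argue by contradiction: suppose $Q_k$ is neither vertical nor horizontal for some $k \ge 1$. By fact~(iv), a vertical component and a horizontal component of $Q_k$ would have to coincide, so the simultaneous failure of ``every component is vertical'' and ``every component is horizontal'' forces the existence of a component $C$ of $Q_k$ that is itself neither vertical nor horizontal. Using the reflection symmetries of $Q_0$---which preserve both the Bedford--McMullen structure (up to relabelling $\D$) and both hypotheses of the lemma---I may assume that $C$ misses both the top boundary and the right boundary of $Q_0$.

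Next I would extract a component of $E$ sitting inside $C$. Since $C$ is a nonempty union of level-$k$ elementary rectangles $\Psi_w(Q_0)$ with $w \in \Sigma^k$, and $\Psi_w(E) \neq \emptyset$ for each such $w$, the set $C \cap E$ is nonempty. For any $p \in C \cap E$, the $E$-component $D$ containing $p$ is connected, contained in $Q_k$, and hence in $C$; in particular $D$ also misses the top and the right of $Q_0$. Combined with the hypothesis $D \cap \partial Q_0 \neq \emptyset$, this forces $D$ to meet the bottom or the left edge of $Q_0$.

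The contradiction would come from embedding $D$ strictly into the interior of $Q_0$ via self-similarity. Since $E$ is not one-sided, there exist $(a_1,b_1),(a_2,b_2),(a_3,b_3),(a_4,b_4) \in \D$ with $b_1 \ge 1$, $b_2 \le m-2$, $a_3 \ge 1$ and $a_4 \le n-2$; setting $w^{\ast} = (a_1,b_1)(a_2,b_2)(a_3,b_3)(a_4,b_4) \in \Sigma^4$, a short geometric-series computation shows $\Psi_{w^{\ast}}(Q_0) \subset (0,1)^2$, so this cell is strictly interior to $Q_0$. Then $\Psi_{w^{\ast}}(D) \subset E$ is a connected subset of an interior cell, and the aim is to show that the $E$-component $D^{\ast}$ containing $\Psi_{w^{\ast}}(D)$ is confined to $\Psi_{w^{\ast}}(Q_0)$. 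This would give $D^{\ast} \cap \partial Q_0 = \emptyset$, contradicting the hypothesis.

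The hard part will be proving that $D^{\ast}$ does not escape the interior cell. Because $D$ may touch the bottom or the left of $Q_0$, the image $\Psi_{w^{\ast}}(D)$ may reach the bottom or the left edge of its cell, so in principle $D^{\ast}$ could extend outside through an adjacent level-$4$ cell in $\Sigma^4$. To rule this out, I would refine the choice of $w^{\ast}$ (or pass to a still deeper level in $\Sigma^{\ell}$) adapted to the specific edge---bottom or left---that $D$ actually meets, and verify that no adjacent cell in $\Sigma^{\ell}$ contributes an $E$-connection through that shared edge. This containment step, in the same spirit as the strong-separation argument behind fact~(iii), will be the technical core of the proof.
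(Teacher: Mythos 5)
Your opening reduction is sound: Fact~(iv) does produce a component $C$ of $Q_k$ that is neither vertical nor horizontal, the reflections of $Q_0$ let you assume $C$ misses the top and the right, and your word $w^{\ast}$ really does satisfy $\Psi_{w^{\ast}}(Q_0)\subset(0,1)^2$ (since $\sum_{j=1}^{4}(m-1)m^{-j}+m^{-4}=1$ and $b_2\leq m-2$, and similarly in the horizontal direction). But the proposal stops exactly where the lemma's difficulty lies: you never prove that the $E$-component $D^{\ast}$ of $\Psi_{w^{\ast}}(D)$ stays inside $\Psi_{w^{\ast}}(Q_0)$, and you say yourself that this ``will be the technical core.'' The sketched remedy --- refining $w^{\ast}$ or passing to a deeper level so that no adjacent cell connects through the relevant edge --- is not available in general. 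Since $D$ meets, say, the bottom of $Q_0$ at points of $E$, the copy $\Psi_{w}(D)$ meets the bottom edge of $\Psi_{w}(Q_0)$ for \emph{every} $w$, and whether the cell directly below $\Psi_w(Q_0)$ belongs to $Q_\ell$ is governed solely by the last digit of $w$: if $w$ ends in $(a,b)$ with $b\geq 1$, that cell is present iff $(a,b-1)\in\D$. Nothing in the hypotheses prevents $(a,b-1)\in\D$ for every $(a,b)\in\D$ with $b\geq 1$ (e.g.\ $\D=(\{0\}\times\{0,\dots,m-1\})\cup\{(1,0)\}$ is not one-sided), so no choice of final digit, at any depth, isolates the copy from below; and when that lower cell is present and some digit lies in row $m-1$, points of $E$ can meet the shared edge from below precisely where $\Psi_w(D)$ touches it. So the containment is not a routine adjacency check in the spirit of Fact~(iii); it can genuinely fail, and no argument is offered that some choice of $w$ makes it hold.

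The paper's proof avoids this trap by never trying to confine an $E$-component to an interior cell. It works with components of the cell unions $Q_\ell$, whose adjacency is purely combinatorial, records the extremal contact indices $i_{0}=\max\{i:(\frac{i}{n^{k}},0)\in C\}$ and $j_{0}=\max\{j:(0,\frac{j}{m^{k}})\in C\}$, shows $Q_k$ must meet the bottom edge beyond $i_0$ or the left edge beyond $j_0$, and places translated copies $T^{k}(C)$ at the \emph{minimal} such positions $i^{\ast}$ (or $j^{\ast}$); that minimality is what guarantees that a further copy $(\frac{p}{n},\frac{q}{m})+T(C')$ is an entire component of some $Q_{\ell}$ avoiding $\partial Q_0$, contradicting the boundary hypothesis via Fact~(i). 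Note also that the paper invokes the boundary hypothesis at several levels ($Q_{2k}$, $Q_{2k+1}$, $Q_{3k}$), whereas your plan uses it only once; to complete your argument you would need to replace the unproved containment claim with bookkeeping of this kind.
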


\begin{proof}
Given a positive integer $k$, arbitrarily choose a  component $C$ of $Q_k$. From Fact (i) and $E\subset Q_k$, there exists a component $C'$ of $E$ such that $C'\subset C$. Since every component in $E$ intersects the boundary of $Q_0$, clearly the component $C$ intersects the boundary of $Q_0$.

Without loss of generality,  we  assume that $C$ intersects the bottom boundary of $Q_0$ and does not intersect the top boundary of $Q_0$. We show that $C$ is horizontal in this case.
\medskip

First, we claim that $C$  intersects  either the  left boundary or the right boundary of $Q_0$. Otherwise, combining with the assumption that $C$ intersects the bottom boundary of $Q_0$, there exists a positive integer $1\leq i\leq n^{k}-2$ such that $T^{k}(Q_0)+(\frac{i}{n^{k}},0)\subset C$. Since  $E$ is not one-sided, there exist $q>0$ and $0\leq p\leq n-1$ such that $(p,q)\in \mathcal{D}$.

Let $C_1=(\frac{i}{n^{k}},0)+T^k(C)$. Then $C_1$ is a component of $Q_{2k}$ since $C$ only intersects the boundary of $Q_0$ at the bottom. Thus $(\frac{p}{n},\frac{q}{m})+T(C_1)\subset Q_{2k+1}$, and $T(C_1)$ only intersects the boundary of $(\frac{i}{n^{k+1}},0)+T^{k+1}( Q_{0})$ at the bottom. Combining this with $q>0$ and the fact that every component of $ Q_{2k+1}$  intersects the boundary of $Q_{0}$, we have $(p,q-1)\in\mathcal{D}$ and
 $T^{k}(Q_{0})+(\frac{i}{n^{k}},\frac{m^{k}-1}{m^{k}})\subset Q_{k}$. Thus
$$T^{k}(C)+\Big(\frac{i}{n^{k}},\frac{m^{k}-1}{m^{k}}\Big)
\subset T^{k}(Q_{k})+\Big(\frac{i}{n^{k}},\frac{m^{k}-1}{m^{k}}\Big)\subset Q_{2k}.$$
By the similar argument as above, $T^{k}(Q_{0})+(\frac{i}{n^{k}},\frac{m^{k}-2}{m^{k}})\subset Q_{k}$.
Repeating it finitely many times, we obtain that $T^{k}(Q_{0})+(\frac{i}{n^{k}},\frac{j}{m^{k}})\subset Q_{k}$ for all $0\leq j\leq m^{k}-1$. Thus $T^{k}(Q_{0})+(\frac{i}{n^{k}},\frac{j}{m^{k}})\subset C$ for all $0\leq j\leq m^{k}-1$, since $T^{k}(Q_0)+(\frac{i}{n^{k}},0)\subset C$.  Therefore, $ C$ is a vertical component in $ Q_k$, which contradicts that $C$ does not intersect the top boundary of $Q_0$.
\bigskip

Next, we show by contradiction that the component $C$ is horizontal.

Assume  that $C$ intersects the left boundary of $Q_0$ but does not intersect the right boundary of $Q_0$. Therefore, $C$ intersects both the bottom boundary and the left boundary of $Q_0$, whereas $C$  intersects neither the top boundary nor the right boundary of $Q_0$. We write
$$i_{0}=\max\big\{i:\,(\frac{i}{n^{k}},0)\in C \big\}, \quad j_{0}=\max\big\{j:\,(0,\frac{j}{m^{k}})\in C \big\}.$$
It is obvious that $i_{0}\leq n^{k}-1$, $j_{0}\leq m^{k}-1$, and there is a curve $\gamma_0$ in $C$ connecting the point $(\frac{i_{0}}{n^{k}},0)$ and the point $(0,\frac{j_{0}}{m^{k}})$. Please see Figure~\ref{figure: Cprime}.

Now, we will prove the following fact:
\begin{equation}\label{Tlem9}
 Q_{k}\cap\Big(\Big[\frac{i_{0}+1}{n^{k}},1\Big]\times\{0\}\Big)\neq\emptyset \quad \text{or}\quad  Q_{k}\cap\Big(\{0\}\times\Big[\frac {j_{0}+1}{m^{k}},1\Big]\Big)\neq\emptyset.
\end{equation}
Since every component of $E$  intersects the boundary of $Q_0$ and $E$ is not one-sided, by Facts (ii) and (iii), the set $ Q_{k}$ intersects either the right boundary or the top boundary of $Q_{0}$. Without loss of generality, we assume that $Q_k$ intersects the right boundary of $Q_0$. Then there exists
$0\leq \ell \leq m^{k}-1$  such  that  $(\frac{n^{k}-1}{n^{k}},\frac{\ell}{m^{k}})+T^{k}( Q_{0})\subset Q_{k}$.

In the case that $\ell=0$, it is easy to see $(1,0)\in Q_{k}\cap\big(\big[\frac{i_{0}+1}{n^{k}},1\big]\times\{0\}\big)$ so that \eqref{Tlem9} holds. Now we assume that $\ell>0$. It follows from $(\frac{n^{k}-1}{n^{k}},\frac{\ell}{m^{k}})+T^{k}( Q_{0})\subset Q_{k}$ that $(\frac{n^{k}-1}{n^{k}},\frac{\ell}{m^{k}})+T^{k}( Q_{k})\subset Q_{2k}$.  Recall that every component of $E$ intersects the boundary of $Q_0$. Since $(\frac{n^{k}-1}{n^{k}},\frac{\ell}{m^{k}})+T^{k}( C)$ only intersects the bottom boundary and  the left  boundary of $(\frac{n^{k}-1}{n^{k}},\frac{\ell}{m^{k}})+T^{k}(Q_{0})$, at least one of the following three sets is contained in $Q_{2k}$:
\[
 \Big(\frac{n^{k}-2}{n^{k}},\frac{\ell}{m^{k}}\Big)+T^{k}( Q_{k}), \Big(\frac{n^{k}-1}{n^{k}},\frac{\ell-1}{m^{k}}\Big)+T^{k}(Q_{k}), \Big(\frac{n^{k}-2}{n^{k}},\frac{\ell-1}{m^{k}}\Big)+T^{k}(Q_{k}).
\]
Repeating this argument, one of the followings must hold:
\begin{enumerate}
\item[(a).] there exist $0\leq n'\leq n^k$ and a curve $\gamma_1$ in $Q_k$ joining  $(\frac{n^{k}-1}{n^{k}},\frac{\ell}{m^{k}})$  and $(\frac{n'}{n^k},0)$;
\item[(b).] there exist $0\leq m^\prime\leq m^k$ and a curve $\gamma_2$ in $Q_k$ joining  $(\frac{n^{k}-1}{n^{k}},\frac{\ell}{m^{k}})$  and $(0,\frac{m^\prime}{m^k})$.
\end{enumerate}
In case (a), $n'\geq i_0+1$, since otherwise the curve $\gamma_1$ intersects $\gamma_0$ so that $(\frac{n^{k}-1}{n^{k}},\frac{\ell}{m^{k}})+T^{k}(Q_{0})\subset C$, which contradicts the assumption that $C$ does not intersect the right boundary of $Q_0$. Similarly, $m^\prime \geq j_0+1$ in case (b). Hence fact \eqref{Tlem9} holds.
\medskip

By fact \eqref{Tlem9}, we assume that $ Q_{k}\cap\big([\frac{i_{0}+1}{n^{k}},1]\times\{0\}\big)\neq\emptyset$.
We write
\begin{equation*}
  i^{*}=\min\big\{i\in\{i_0+1,\ldots,n^k-1\}:\,( \frac{i}{n^{k}},0)+T^{k}( Q_{0})\subset Q_{k}\big\}.
\end{equation*}
Then $C'=(i^*/n^k,0)+T^k(C)$ is a component of $Q_{2k}$ and
\begin{equation}\label{eq:4-5}
  \max\Big\{i:\, \big(\frac{i^*}{n^k}+\frac{i}{n^{2k}},0\big)+T^{2k}(Q_0)\subset C'\Big\} = i_0-1\leq i^*-2.
\end{equation}
Please see Figure~\ref{figure: Cprime}.

\begin{figure}[tbhp]
\begin{center}
  \includegraphics[height=5cm]{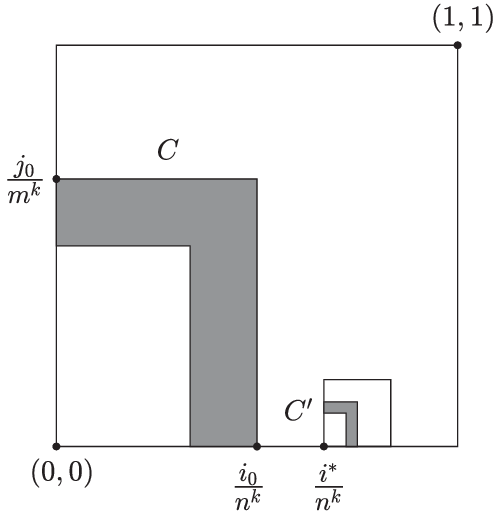}
\end{center}
\caption{Components $C$ and $C^\prime$}
\label{figure: Cprime}
\end{figure}

Recall that there exist $q>0$ and $0\leq p\leq n-1$ such that $(p,q)\in \mathcal{D}$.

If $(\frac{i}{n^k}, \frac{m^k-1}{m^k})+T^k(Q_0)\not\subset Q_k$ for all $0\leq i\leq i^*-1$, then
\[
  \Big(\frac{i^*}{n^k}+\frac{i}{n^{2k}}, \frac{m^{2k}-1}{m^{2k}}\Big)+T^{2k}(Q_0)\not\subset Q_{2k}, \quad -1\leq i\leq i^*-1.
\]
Combining this with \eqref{eq:4-5}, the set $(\frac{p}{n},\frac{q}{m})+T(C')$ does not  intersect  $(\frac{p}{n},\frac{q-1}{m})+T(Q_{2k})$ so that it
is a component of $ Q_{2k+1}$. However, $(\frac{p}{n},\frac{q}{m})+T( C')$ does not intersect the  boundary  of $Q_{0}$, which  contradicts the  fact that every  component of $E$  intersects the boundary of $Q_0$.

Now we assume that there exists $0\leq i^\prime\leq i^*-1$, such that $(\frac{i^\prime}{n^k},\frac{m^k-1}{m^k})+T^k(Q_0)\subset Q_k$. Using the same argument in proving \eqref{Tlem9} and noticing that $(\frac{i}{n^k},0)+T^k(Q_0)\not\subset Q_{k}$ for all $i_0\leq i\leq  i^*-1$, there exists $j_0+1\leq j'\leq m^k-1$, such that
\[
  \Big(0,\frac{j'}{m^k}\Big) + T^k(Q_0) \subset Q_k.
\]
Let
\begin{equation*}
j^{*}=\min\big\{j\in\{j_0+1,\ldots,m^k-1\}:\,(0,\frac{j}{m^{k}})+T^{k}( Q_{0})\subset Q_{k}\big\}.
\end{equation*}
Then $C''=(0,\frac{j^*}{m^k})+T^k(C)$ is a component of $Q_{2k}$. Thus $(\frac{i^*}{n^k},0)+T^k(C'')$ is a component of $Q_{3k}$, and it does not intersect the  boundary  of $Q_{0}$, which gives a contradiction again.

From the above argument,  $C$ is a horizontal component. Thus every component of $Q_k$ is either vertical or horizontal. By Fact (iv), the lemma holds.
\end{proof}

\begin{lem}\label{Thm}
Let $E$ be a Bedford-McMullen set which is not  one-sided. If  $E$ is normal and every component of $E$  intersects the boundary of $Q_0$, then $E$ contains only finitely many components.
\end{lem}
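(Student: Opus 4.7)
The plan is to argue by contradiction, using Lemma~\ref{Tlem} as the main tool.

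Suppose, for contradiction, that $E$ has infinitely many components. The hypotheses of Lemma~\ref{Tlem} (not one-sided, every component of $E$ meeting the boundary of $Q_0$) coincide with those of the present lemma, so that lemma gives that every $Q_k$ is either vertical or horizontal. By the symmetry interchanging the two coordinate axes, which preserves both the property of being nonlinear and that of not being one-sided while swapping ``vertical'' and ``horizontal'', we may reduce to the case in which $Q_k$ is vertical for every sufficiently large $k$. Indeed, if some $Q_{k_0}$ is vertical but not horizontal, then each component of $Q_k$ for $k\geq k_0$ lies in a $Q_{k_0}$-component that does not reach both the left and right boundaries, ruling out any horizontal-only $Q_k$; Lemma~\ref{Tlem} then forces verticality.

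Since $Q_{k+1}\subset Q_k$ and every component of $Q_k$ contains at least one component of $Q_{k+1}$, one has the monotonicity $\#\mathscr{C}(Q_{k+1})\geq\#\mathscr{C}(Q_k)$. If $\#\mathscr{C}(Q_k)$ were bounded, say equal to $p$ for all $k\geq k_0$, then each nested sequence of $Q_k$-components would intersect to a single component of $E$, giving $\#\mathscr{C}(E)=p<\infty$; hence infinitely many $E$-components force $\#\mathscr{C}(Q_k)\to\infty$.

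The crux is to show that $Q_k$ being vertical for every large $k$ together with $\#\mathscr{C}(Q_k)\to\infty$ forces $\mathcal{D}$ into the linear form $\mathcal{D}=\mathcal{A}\times\{0,\ldots,m-1\}$ for $\mathcal{A}=\{i:(i,j)\in\mathcal{D}\text{ for some }j\}$, contradicting nonlinearity. First, verticality of $Q_1$ gives $M=m$: an empty row would produce a horizontal strip of width $1/m$ in $Q_0$ blocking any top-to-bottom connection in $Q_1$. Next, supposing for contradiction there is a ``hole'' $(i_0,j_0)\in\mathcal{A}\times\{0,\ldots,m-1\}\setminus\mathcal{D}$, any vertical chain of rectangles in $\mathcal{D}$ through column $i_0$ must detour via an adjacent column. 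Iterating the self-affine construction propagates this obstruction into every deeper level, and a careful enumeration of the components of $Q_k$---adapting the path-tracing technique from the proof of Lemma~\ref{Tlem}, together with the non-one-sided hypothesis and the basic facts (i)--(iv) from the start of the section---would give a uniform bound on $\#\mathscr{C}(Q_k)$, contradicting the conclusion $\#\mathscr{C}(Q_k)\to\infty$.

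The main obstacle is this hole-propagation analysis: one must simultaneously track how the presence of a hole at $(i_0,j_0)$ forces local obstructions at every subsequent level, and how the non-one-sided hypothesis and the requirement that every component of $E$ meets the boundary of $Q_0$ interact with the combinatorics of $\mathcal{D}$ to keep $\#\mathscr{C}(Q_k)$ bounded. This case analysis is of comparable technical complexity to the proof of Lemma~\ref{Tlem} and will likely occupy the bulk of the argument.
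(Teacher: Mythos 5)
Your overall skeleton matches the paper's: invoke Lemma~\ref{Tlem}, reduce to the case where $Q_k$ is vertical for all $k$, and then aim for a uniform bound on $\#\C(Q_k)$ (equivalently, a contradiction with $\#\C(Q_k)\to\infty$). The reductions in your first three paragraphs are correct. But your fourth paragraph --- which you yourself label ``the crux'' and ``the bulk of the argument'' --- is not a proof; it is a promise. You assert that a hole $(i_0,j_0)\in\A\times\{0,\ldots,m-1\}\setminus\D$ ``propagates'' to deeper levels and that ``a careful enumeration of the components of $Q_k$ \dots would give a uniform bound,'' but you supply no mechanism for either claim. This step is the entire content of the lemma: everything preceding it is soft, and nothing in Lemma~\ref{Tlem} or Facts (i)--(iv) bounds $\#\C(Q_k)$ by itself. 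Note that for the linear set with $\D=\A\times\{0,\ldots,m-1\}$ every $Q_k$ is vertical and $\#\C(Q_k)=(\#\A)^k\to\infty$, so the bound must exploit nonlinearity in a quantitative way that your sketch never locates; ``hole propagation plus careful enumeration'' does not identify where verticality and the existence of a hole actually collide.

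For comparison, the paper's mechanism is quite different from a column-by-column hole analysis. One partitions $\D$ into maximal vertically consecutive runs within each column (``vertically equivalent classes'' $\Lambda_1,\ldots,\Lambda_\beta$), lets $L$ be the minimal run length, and observes that $L<m$ precisely because $E$ is nonlinear. Setting $Q_{k,L}=\bigcup_{j=0}^{L-1}\big(Q_k+(0,j)\big)$, one shows that \emph{every} component of $Q_{k,L}$ must meet the left or right boundary of $[0,1]\times[0,L]$: otherwise the affine copy of such a component placed inside a minimal run of length $L<m$ would be a component of $Q_{k+1}$ unable to reach both the top and the bottom of $Q_0$, contradicting verticality of $Q_{k+1}$. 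Since verticality of $Q_k$ forces at most two of its components to meet the left or right boundary of $Q_0$, the number of components of $Q_{k,L}$ meeting the bottom edge is at most $2L$, and a counting argument over the runs meeting the bottom row yields $\#\C(Q_{k+1})\leq 2\beta_0 L$ with $\beta_0$ depending only on $\D$. Without something playing the role of this stacking-and-rescaling argument, your proposal does not establish the lemma.
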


\begin{proof}
Since $E$ is normal and every component of $E$  intersects the boundary of $Q_0$, by Lemma~\ref{Tlem},  for every integer $k>0$, $Q_{k}$ is either vertical or horizontal. Without loss of generality, we assume that there exist infinitely many integers $k>0$ such that the sets $Q_k$ are vertical. By Fact (i), $Q_k$ is vertical for all $k>0$.

First, we define an equivalence relation on $\mathcal{D}$. Given $(i,j),(i^{\prime},j^{\prime})\in\mathcal{D}$, we say that $(i,j)$ and $(i^{\prime},j^{\prime})$ are \emph{vertically equivalent with respect to $\mathcal{D}$}, if $i=i^{\prime}$ and $(i,\ell)\in\mathcal{D}$ for all $\ell$ lying between $j$ and $j'$.  We denote the vertically equivalent classes by $\Lambda_{1},\Lambda_{2},\ldots,\Lambda_{\beta}$. Without loss of generality, we assume that there exists  $\beta_0\in\{1,2,\ldots,\beta\}$ such that
\begin{align*}
 & \Lambda_{i}\cap\big( \{0,1,\ldots,n-1\}\times\{0\}\big)\neq\emptyset, \qquad  i=1,2,\ldots, \beta_0, \qquad  \mbox{and}  \\
 & \Lambda_{i}\cap\big( \{0,1,\ldots,n-1\}\times\{0\}\big)=\emptyset, \qquad i= \beta_0+1, \ldots, \beta.
\end{align*}
 We write $\lambda_{i}=\#\Lambda_{i}$ for each $1\leq i\leq\beta$, and define $\lambda=\min\{\lambda_1,\ldots,\lambda_\beta\}$. Then $\lambda<m$ since $E$ is normal.

In the proof of this lemma, for a subset $A$ of $\mathbb{R}^2$, we denote by $\C^B(A)$ the set of all components of $A$ which intersect the bottom boundary of $Q_0$.

Given  positive integers $k,\ell$, we define
\begin{equation*}
 Q_{k,\ell}= Q_{k}\cup( Q_{k}+(0,1))\cup\cdots\cup( Q_{k}+(0,\ell-1)).
\end{equation*}
Notice that $\lambda_{i}\geq \lambda$ for all $1\leq i\leq \beta$. Thus, given a component $C$ in $Q_{k, \lambda_i}$ with $C$ intersecting the bottom boundary of $Q_0$, there exists a component $C'$ in $Q_{k,\lambda}$ such that $C'\subset C$ and $C'$ intersects the bottom boundary of $Q_0$. Hence $\#\C^B( Q_{k,\lambda_{i}})\leq \#\C^B( Q_{k,\lambda})$.

Notice that
\[
  Q_{k+1}=\bigcup_{w\in \D} \Psi_w(Q_k) = \bigcup_{i=1}^\beta\bigcup_{w\in \Lambda_i} \Psi_w(Q_k),
\]
while $\bigcup_{i=\beta_0+1}^\beta\bigcup_{w\in \Lambda_i} \Psi_w(Q_k)$ does not  intersect the bottom boundary of $Q_0$. Since every component in $Q_{k+1}$  intersects the bottom boundary of $Q_0$, each of them must contain a component in $\bigcup_{i=1}^{\beta_0}\bigcup_{w\in \Lambda_i}\Psi_w(Q_k)$ intersecting the bottom boundary of $Q_0$. Thus
\[
  \#\C(Q_{k+1}) \leq \# \C^B\Big(\bigcup_{i=1}^{\beta_0} \bigcup_{w\in \Lambda_i} \Psi_w(Q_k)\Big) \leq \sum_{i=1}^{\beta_0} \# \C^B \Big(\bigcup_{w\in \Lambda_i} \Psi_w(Q_k)\Big).
\]
Combining this with the fact that $\#\C^B\big(\bigcup_{w\in \Lambda_i} \Psi_w(Q_k) \big) = \# \C^B(Q_{k,\lambda_i}) $ for all $1\leq i\leq \beta_0$, we have

\begin{eqnarray}\label{eq:4-8}
  \#\C( Q_{k+1}) \leq \sum_{i=1}^{\beta_0} \#\C^B( Q_{k,\lambda_{i}})
                  \leq \beta_0 \cdot \big(\#\C^B( Q_{k,\lambda})\big).
\end{eqnarray}

 Next, we show  that  $\#\C^B( Q_{k,\lambda})\leq 2\lambda$ for every positive integer $k$.

 Let $\partial_{LR} (A)$ be the union of the left boundary and the right boundary of a rectangle $A$. Since $ Q_{k}$ is vertical, we have
 \[
   \#\{C\in \C(Q_k):\, C \cap \partial_{LR}(Q_0) \not=\emptyset \} \leq 2.
 \]
 Hence, if we write $Q_{0,\lambda}=[0,1]\times [0,\lambda]$, then
 \[
   \#\{C\in \C(Q_{k,\lambda}):\, C \cap \partial_{LR}(Q_{0,\lambda}) \not=\emptyset \} \leq 2\lambda.
\]
Thus in order to prove  $\#\C^B( Q_{k,\lambda})\leq 2\lambda$, it suffices to show that every component of $Q_{k,\lambda}$ must intersect  either the left boundary or the right boundary of  $Q_{0,\lambda}$. We prove this by contradiction.

 Assume that there exists a component $ C$ of $Q_{k,\lambda}$ such that $C$ intersects neither the left boundary nor the right boundary of $Q_{0,\lambda}$. According to the definition of  $\lambda$, there exists $(p,q)\in \mathcal{D}$ such that $\{(p,q),(p,q+1),\ldots,  (p,q+\lambda-1) \}$ is a vertically equivalent class. Notice that $(\frac{p}{n},\frac{q}{m})+T( Q_{k,\lambda})\subset  Q_{k+1}$. Write
\[
  C'=(\frac{p}{n},\frac{q}{m})+T( C).
\]
Then $C'\subset (\frac{p}{n},\frac{q}{m})+T( Q_{k,\lambda})$ and
$C'$ intersects neither the left boundary nor the right boundary of  $(\frac{p}{n},\frac{q}{m})+T( Q_{0,\lambda})$. Thus $C'$ is  a component of $ Q_{k+1}$. Since $\lambda<m$,  the component
$C'$ can intersect at most one of the top boundary and the bottom  boundary of $Q_{0}$, which implies that $Q_{k+1}$ is not vertical. This contradicts the fact that  $Q_k$ is vertical for all $k>0$.

Combining \eqref{eq:4-8} with $\#\C^B( Q_{k,\lambda})\leq 2\lambda$ for all $k\geq 1$, we have  $\#\C( Q_{k+1})\leq 2\beta_0 \lambda.$
Since both $\beta_0$ and  $\lambda$ are independent of $k$,  $\#\C( Q_{k})\leq 2\beta_0 \lambda$ for all $k\geq 2$. Hence there are at most $2\beta_0 \lambda$ components of $E$.
\end{proof}

\begin{lem}\label{Tlem4}
Let $E$ be a Bedford-McMullen set containing  infinitely many components.  If $E$ is  one-sided,  then $E$ satisfies the component separation condition.
\end{lem}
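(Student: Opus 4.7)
The plan is to exploit the one-sided hypothesis to reduce the CSC for $E$ to a one-dimensional question about a Cantor-like set on one edge of $Q_0$. The four clauses in the definition of ``one-sided'' are related by the reflections $x\mapsto 1-x$ and $y\mapsto 1-y$ of $Q_0$ together with the swap of the two coordinate axes (which exchanges $n$ and $m$, permitted here because the remark at the start of Section~4 says $n\geq m$ is not used in the proofs of this section). So I may assume $\mathcal{D}\subset\{0,1,\ldots,n-1\}\times\{0\}$; the other three one-sided cases follow by applying the appropriate isometry of $Q_0$.

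Set $A=\{i:(i,0)\in\mathcal{D}\}$ and $\phi_i(x)=(x+i)/n$. Because every IFS map multiplies the $y$-coordinate by $1/m$ and adds $0$, iteration forces $y\to 0$, so $E=F\times\{0\}$ where $F$ is the 1-D self-similar attractor of $\{\phi_i\}_{i\in A}$. A direct induction yields $Q_k=F_k\times[0,m^{-k}]$ with $F_k=\bigcup_{w\in\Sigma^k}\phi_w([0,1])$, and the components of $Q_k$ are exactly the rectangles $I\times[0,m^{-k}]$ for $I$ a component of $F_k$. I now check which such rectangles $C=I\times[0,m^{-k}]$ survive as components of $\wQ_k$: any translate $Q_k+(i,j)$ with $j\neq 0$ sits in a $y$-strip disjoint from $[0,m^{-k}]$ (since $m^{-k}<1$) and so cannot be adjacent to $C$; the only dangerous translates are $Q_k+(\pm 1,0)$, which lie in the $x$-ranges $[-1,0]$ and $[1,2]$ and can meet $C$ only at the vertical lines $x=0$ and $x=1$. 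Hence a sufficient condition for $C$ to be a component of $\wQ_k$ is that $I$ is \emph{interior} in $F_k$, meaning $0\notin I$ and $1\notin I$.

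It remains to produce a level $k_0$ at which $F_{k_0}$ has an interior component. Since $E$ has infinitely many components, so does $F$; this forces $|A|\geq 2$ and $A\neq\{0,\ldots,n-1\}$, so $F$ is a compact, totally disconnected subset of $[0,1]$ with infinitely many points. I claim the maximum component diameter of $F_k$ tends to $0$: otherwise a subsequence of components of $F_k$, all being closed subintervals of $[0,1]$ with diameter bounded below by some $\delta>0$, admits (by Blaschke selection) a Hausdorff limit which is a nondegenerate closed interval contained in $\bigcap_k F_k=F$, contradicting total disconnectedness. Fixing three distinct points of $F$, they lie in three distinct components of $F_{k_0}$ once $k_0$ is large enough; since at most one of these components contains $0$ and at most one contains $1$, at least one is interior, which yields the desired component $C$ witnessing CSC.

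The main obstacle is the diameter-shrinking step above, which is the only genuinely topological ingredient; the rest reduces to routine bookkeeping based on the product structure $Q_k=F_k\times[0,m^{-k}]$. An alternative to the compactness argument is a direct combinatorial count using the fact that any internal gap of $F_1$ propagates under $\{\phi_w\}_{w\in\Sigma^{k-1}}$ to $|A|^{k-1}$ disjoint internal gaps of $F_k$; however this shortcut requires a case split for block subsets $A=\{s,s+1,\ldots,s+t\}$ in which $F_1$ has no internal gap, whereas the Blaschke argument handles all configurations uniformly.
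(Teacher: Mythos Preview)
Your argument is correct but takes a genuinely different route from the paper's. The paper disposes of the lemma by a short case analysis that pins down $k_0\le 2$: if some component of $Q_1$ already misses both vertical sides of $Q_0$, that component witnesses the CSC with $k_0=1$; otherwise (in your notation) $F_1$ has at most two components, and one checks by hand that a single block touching one endpoint gives CSC at $k_0=1$, while two blocks $\{0,\dots,p\}\cup\{q,\dots,n-1\}$ yield the explicit interior component $\phi_q([0,(p+1)/n])$ of $F_2$, hence $k_0=2$. Your approach instead uses a soft compactness step (Blaschke selection, though sequential compactness of the endpoint pairs in $[0,1]^2$ already suffices) to force the maximal component diameter of $F_k$ to zero, and then pigeonholes three points of $F$ into three distinct components of some $F_{k_0}$, at least one of which must avoid both endpoints. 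This handles all digit configurations uniformly and avoids the block/non-block case split you mention at the end---which is precisely the split the paper carries out---at the cost of invoking slightly heavier tools and giving no effective bound on $k_0$. Both arguments are sound; the paper's buys explicitness, yours buys uniformity.
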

\begin{proof}
Without loss of generality, we  assume that $\mathcal{D}\subset \{0,1,\ldots,n-1\}\times\{0\}$.  Since $E$ contains  infinitely many components, $\D\not=\{0,1,\ldots,n-1\}\times\{0\}$.

If there exists a  component $ C$ of $ Q_{1}$ such that $C$ intersects neither the left boundary nor the right boundary of $Q_{0}$,  it is clear that $E$ satisfies the component separation condition.

Otherwise all components $C$ of $ Q_{1}$ intersect either the left boundary or the right boundary of  $Q_{0}$. Then $\#\C(Q_1)=1$ or $2$ since $E$ is one-sided. Obiviously, $E$ satisfies the component separation condition if $\#\C(Q_1)=1$.
Now we assume that $\#\C(Q_1)=2$. Then there exist $0\leq p<q\leq n-1$, such that $\mathcal{D}=\{(i,0)|\, 0\leq i\leq p, \textrm{ or } q\leq i\leq n-1\}$  with $p<q-1$. Write
\[
  C=\Big\{0,\frac{1}{n},\ldots,\frac{p}{n}\Big\} \times \{0\} + T(Q_0).
\]
Then the set $(q/n,0)+T(C)$ is both a component of $ Q_{2}$ and a component of $\widetilde{Q}_2$. Thus $E$ satisfies the component separation condition.
\end{proof}

\begin{lem}\label{Thmmm9}
Let $E$ be a simple Bedford-McMullen set  with infinitely many connected components. Then
$$ g_k(E)\asymp k^{-1/(\bdim E-1)}.$$
\end{lem}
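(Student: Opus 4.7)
The plan is to reduce the two-dimensional problem to estimating the gap sequence of a one-dimensional self-similar Cantor set, after which Proposition~\ref{prop1} closes the argument. Without loss of generality assume $\mathcal{D}=\mathcal{A}\times\{0,1,\ldots,m-1\}$ with $2\leq\#\mathcal{A}\leq n-1$; both restrictions on $\mathcal{A}$ are forced by the hypothesis that $E$ has infinitely many components, and the symmetric linear case is handled by exchanging the roles of $x$ and $y$. Let $A\subset[0,1]$ be the self-similar attractor of $\{\phi_i(x)=(x+i)/n\colon i\in\mathcal{A}\}$, set $N'=\#\mathcal{A}$, and note that $d:=\log N'/\log n=\bdim E-1$ by the Bedford--McMullen dimension formula (here $M=m$ and $N=mN'$). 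A direct verification of the Hutchinson equation gives $E=A\times[0,1]$, and $A$ is totally disconnected (it contains no interval) since $\bdim A=d<1$.

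The next step is to prove $h_E(\delta)=h_A(\delta)$ for every $\delta>0$. Each vertical fiber $\{a\}\times[0,1]\subset E$ is compact and connected, and therefore $\delta$-chainable internally for any $\delta>0$. Hence $(x,y)$ and $(x',y')$ in $E$ are $\delta$-equivalent in $E$ if and only if $x$ and $x'$ are $\delta$-equivalent in $A$: project a $\delta$-chain in $E$ onto the first coordinate for one direction, and run an $x$-chain in $A$ at a fixed height followed by a vertical move within $\{x'\}\times[0,1]$ for the other.

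It remains to establish $h_A(\delta)\asymp\delta^{-d}$. Fix $\delta=n^{-k}$. The upper bound $h_A(n^{-k})\leq(N')^k$ follows because $A$ is covered by the $(N')^k$ basic cylinders of length $n^{-k}$ and any two points of $A$ in a common cylinder are within distance $n^{-k}$. For the lower bound, $A$ is totally disconnected with infinitely many points, so it admits at least one gap $(a,b)$ with $a,b\in A$, $(a,b)\cap A=\emptyset$, of some length $g_0>0$. Critically, every gap of $A$ lies strictly inside $(0,1)$: if $0\in A$ then $0\in\mathcal{A}$, making $0$ an accumulation point of $A$ which cannot be a gap endpoint, and symmetrically for $1$. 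By self-similarity, for each $j\geq 1$ and each $w\in\mathcal{A}^j$ the interval $(\phi_w(a),\phi_w(b))$ has endpoints in $\phi_w(A)\subset A$ and interior sitting in the open cylinder $\phi_w((0,1))$; since distinct open cylinders are pairwise disjoint, these scaled intervals are distinct gaps of $A$, each of length $g_0n^{-j}$. Summing over $j$ with $g_0n^{-j}\geq n^{-k}$ produces at least $c(N')^k$ gaps of $A$ of length $\geq n^{-k}$; as each such gap separates $n^{-k}$-equivalence classes we obtain $h_A(n^{-k})\geq c(N')^k$. Interpolating to general $\delta\in(n^{-k-1},n^{-k}]$ yields $h_E(\delta)=h_A(\delta)\asymp\delta^{-d}$, and Proposition~\ref{prop1} then concludes $g_k(E)\asymp k^{-1/(\bdim E-1)}$.

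The delicate point is exactly this lower bound: because adjacent cylinders $\phi_w([0,1])$ and $\phi_{w'}([0,1])$ may share a boundary point whenever consecutive integers both lie in $\mathcal{A}$, one must rule out any merging of scaled gaps from different cylinders. The strict-interior observation above supplies the needed separation, so the gap count remains $(N')^j$-multiplicative in $j$ and the estimate goes through.
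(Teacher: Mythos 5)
Your proof is correct, and it follows the same overall strategy as the paper's: reduce to the one-dimensional attractor $A=K(n,\mathcal{A})$ via $E=A\times[0,1]$, observe that the vertical fibres are $\delta$-chainable so that $h_E=h_A$ (equivalently, $E$ and $A$ have the same gap sequence), and finish with Proposition~\ref{prop1}. The genuine difference is in the one-dimensional estimate. The paper first normalizes the digit set: it invokes the Lipschitz equivalence theorem of \cite{RRX06} to replace $\mathcal{A}$ by $\{0,1,\ldots,n_0-1\}$ and the invariance of gap sequences under Lipschitz maps from \cite{RaRuY08}, after which the discontinuity points of $h_K$ can be computed exactly ($\delta_j=\tfrac{n-n_0}{(n-1)n^j}$, $h_K(\delta_j)=n_0^{j-1}$). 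You instead estimate $h_A(\delta)$ directly for an arbitrary digit set: the upper bound from the $(N')^k$ level-$k$ cylinders and the lower bound by pushing one fixed gap $(a,b)$ into every level-$j$ cylinder. Your route is more self-contained (no appeal to the nontrivial Lipschitz-equivalence result), at the price of obtaining only two-sided bounds rather than exact values, which is all that is needed. Two small points. First, the justification ``an accumulation point of $A$ cannot be a gap endpoint'' is false as stated (in the middle-third Cantor set, $1/3$ is both); what you need, and what is true, is that if $0\in A$ then $0$ is accumulated by $A$ \emph{from the right}, which is the only side on which a gap at $0$ could lie, and symmetrically for $1$. Second, the ``strict interior'' observation is in fact not needed for the count: for distinct $w\in\mathcal{A}^j$ the intervals $\phi_w((a,b))$ lie in pairwise disjoint open cylinders regardless of whether $a>0$ and $b<1$, and each is a gap of $A$ because $A\cap\phi_w((0,1))=\phi_w(A\cap(0,1))$; so no merging across cylinder boundaries can occur in any case.
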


\begin{proof}  First we claim that if $K=K(n, \A)$ is the unique compact subset of $[0,1]$ satisfying $K=\frac{1}{n}(K+\A)$, where $\A$ is a subset of $\{0,1,\ldots,n-1\}$ with $2\leq \#\A \leq n-1$, then $g_k(K)\asymp k^{-1/ \bdim K}$.

We write $n_{0}=\#\A$ and assume that $\A=\{i_1,\ldots,i_{n_0}\}$.
 Let $\A'=\{j_1,\ldots,j_{n_0}\}$ be a proper subset of $\{0,1,\ldots,n-1\}$. From~\cite{RRX06}, we know that $K(n,\A')$ is Lipschitz equivalent to $K(n,\A)$. Thus, from~\cite{RaRuY08}, their gap sequences are comparable. Hence we assume without loss of generality that
  $\A=\{0,1,\ldots,n_0-1\}.$

Let  $\delta_1>\delta_2>\cdots $ be the discontinuity points of $h_{K}(\delta)$. Then
\[
  \delta_{j}=\frac{n-n_{0}}{(n-1)n^{j}}, \qquad h_{K}(\delta_{j})=n_{0}^{j-1}, \qquad j=1,2,\ldots.
\]
For every $k\in \Z^+$, we choose $j\in \Z^+$ such that $n_{0}^{j-1}\leq k< n_{0}^{j}.$ From \eqref{eq:gap-prop}, $g_k(K)=\delta_j$.
It is well-known that $\bdim K=\frac{\log n_0}{\log n}$,  see \cite{Falco03} for example. Thus $n^j= (n_0^j)^{1/\bdim K} \asymp k^{1/\bdim K}$ giving
\[
  g_k(K)= \frac{n-n_{0}}{(n-1)n^{j}} \asymp n^{-j} \asymp k^{-1/\bdim K}.
\]
This completes the proof of the claim.

Now we prove the lemma. Since $E$ is simple and contains infinitely many connected components, we may assume without loss of generality that $E=K(n,\A)\times [0,1]$ for some $\A\subset \{0,1,\ldots,n-1\}$ with $2\leq \#\A\leq n-1$. Notice that $E$ and $K(n,\A)$ have the same gap sequence, and $\bdim E=1+\bdim K(n,\A)$. The lemma follows from the claim.
\end{proof}

\medskip

\begin{proof}[Proof of Theorem~\ref{Thm2}]
  From Theorem~\ref{thm1} and Lemma~\ref{Thmmm9}, it suffices to show that if $E$  is normal, then $E$ satisfies the component separation condition.

If $E$ is  one-sided, by Lemma~\ref{Tlem4},  the component separation condition holds.  Now we assume that $E$ is not one-sided. By  Lemma~\ref{Thm},  there  exists a connected component $C$ of $E$  such that $C$ does not intersect the boundary of $Q_0$.  By Fact (i), for each $k\in \Z^+$, there exists a component $C_k$ in $Q_k$ containing $C$. Since $C_k$ is decreasing with respect to $k$, by a well-known topological result, $\bigcap_{k=1}^\infty C_k$ is connected, for examples see \cite[Exercise 11 in Section 26]{Munkres}. Thus $\bigcap_{k=1}^\infty C_k$ is a connected subset of $E$. Combining this with $C\subset \bigcap_{k=1}^\infty C_k$ and with the fact that $C$ is the component of $E$, we have $C= \bigcap_{k=1}^\infty C_k$.  Thus there exists $k_0\in \Z^+$ such that for all $k\geq k_0$, $C_k$ does not intersect the boundary of $Q_0$. This implies that the component separation condition holds.
\end{proof}

\section{Application to Lipschitz equivalence}
Recently there has been some progress on Lipschitz equivalence of totally disconnected Bedford-McMullen sets \cite{LiLiMi13, RaYaZh20, YangZhang20}. We apply our result to a more  general setting. Fix integers $m$ and $n$ such that $n> m\geq 2$. Let $\mathcal{D}_1, \mathcal{D}_2$ be subsets of $\{0,\dots,n-1\}\times
\{0,\dots,m-1\}$ with $1<\#\D_1=\#\D_2<mn$. Let $E_i=E(n,m,\D_i)$, $i=1,2$, be the Bedford-McMullen sets corresponding to $n,m$ and $\D_i$.  Let $N$ be the common value of $\#\D_1$ and $\#\D_2$. Write
\begin{equation}\label{eq:M1M2def}
  M_k=\#\{j\colon (i,j)\in \mathcal{D}_k \textrm{ for some } i\}, \quad k=1,2.
\end{equation}
From \eqref{eq:DboxFormula}, it is easy to see that
\begin{equation}\label{eq:5-5}
  \bdim E_1=\bdim E_2  \textrm{ if and only if } M_1=M_2.
\end{equation}

Assume that exactly one of $E_1$ and $E_2$ is simple. We claim that their gap sequences are not comparable. To prove this, it suffices to consider the case that both $E_1$ and $E_2$ contain infinitely many components.
Assume that $E_1$ is simple and $E_2$ is normal. Notice that
\begin{align}
  &\bdim E_1-1=\frac{\log N-\log M_1}{\log n} + \frac{\log M_1}{\log m} -1 \leq \frac{\log N-\log M_1}{\log n} < \frac{\log N}{\log n},  \notag \\
  &\bdim E_2=\frac{\log N-\log M_2}{\log n} + \frac{\log M_2}{\log m} > \frac{\log N}{\log n}. \label{eq:5-6}
\end{align}
By Theorem~\ref{Thm2}, $\{g_k(E_1)\}$ and $\{g_k(E_2)\}$ are not comparable. Thus $E_1$ and $E_2$ are not Lipschitz equivalent.

Assume that both $E_1$ and $E_2$ are simple. We claim that their gap sequences are comparable if and only if they are Lipschitz equivalent. To prove this, it suffices to consider the case that both $E_1$ and $E_2$ contain infinitely many components. From Theorem~\ref{Thm2},  in this case,  $\{g_k(E_1)\}$ and $\{g_k(E_2)\}$ are comparable if and only if $\bdim E_1=\bdim E_2$. By \eqref{eq:5-5}, both of them are equivalent to $M_1=M_2$. Thus, $\{g_k(E_1)\}$ and $\{g_k(E_2)\}$ are comparable if and only if there exist self-similar sets $K_1$ and $K_2$ such that one of the followings holds:
\begin{itemize}
  \item[(1).] $E_i=[0,1]\times K_i$, $i=1,2$;
  \item[(2).] $E_i=K_i\times [0,1]$, $i=1,2$.
\end{itemize}
By using the result in \cite{RRX06}, if one of the above conditions holds, then $K_1$ and $K_2$ are Lipschitz equivalent so that $E_1$ and $E_2$ are Lipschitz equivalent. Thus the claim holds.

Assume that both $E_1$ and $E_2$ are normal and contain infinitely many components. We claim that their gap sequences are comparable if and only if they have the same box dimension. From \eqref{eq:5-6}, $\bdim E_2>\frac{\log N}{\log n}$. Thus, if $M_1=m$ and $M_2<m$, then $\{g_k(E_1)\}$ and $\{g_k(E_2)\}$ are not comparable. Combining this result with  Theorem~\ref{Thm2} and \eqref{eq:5-5}, we know the claim holds. We remark that a similar argument was given in \cite{RaYaZh20} for the totally disconnected case.

It is well-known that most of dimensions, including box, packing, Hausdorff, Assouad, lower and intermediate dimensions, are stable under bi-Lipschitz maps. Thus, it is natural to ask the following question:
\begin{ques}\label{ques:1}
Can we construct two Bedford-McMullen sets $E_1$ and $E_2$ such that $\dim E_1= \dim E_2$ for all these dimensions, while their gap sequences are not comparable?
\end{ques}
There are explicit formulas for the box, packing, Hausdorff, Assouad and lower dimensions of Bedford-McMullen sets, while the box dimension and the packing dimension always coincide. For example, see Theorem~2.1 in \cite{Fraser20} for these formulas.  However, in general, it is not easy to obtain the intermediate dimension of Bedford-McMullen sets (see \cite{BanKol21}). Thus, in this paper, we only consider the box, Hausdorff, Assouad and lower dimensions.

Given a Bedford-McMullen set $E$, we say $E$ has \emph{uniform fibres} if all non-empty rows contain the same number of rectangles.  Notice that there is a dichotomy (see \cite{Fraser20}),  in the uniform fibres case,
\[
  \dim_{\mathrm L} E=\dim_{\mathrm H} E=\dim_{\mathrm B} E=\dim_{\mathrm A} E,
\]
and, in the non-uniform fibres case,
\[
  \dim_{\mathrm L} E<\dim_{\mathrm H} E<\dim_{\mathrm B} E<\dim_{\mathrm A} E.
\]

By using the above argument, it is quite easy to answer Question~\ref{ques:1} positively if exactly one of $E_1$ and $E_2$ is simple and both of them have uniform fibres.
\begin{exmp}\label{ex:LipUniFibre}
Let $E_1=E(3,2,\D_1)$ and $E_2=E(3,2,\D_2)$ with
\[
  \D_1=\{(0,0),(2,0),(0,1),(2,1)\}, \quad \D_2=\{(0,0),(1,0),(0,1),(2,1)\}.
\]
Then $\dim E_1=\dim E_2=1+\log_3 2$ for the box, Hausdorff, Assouad and lower dimensions, while their gap sequences are not comparable. See Figure~\ref{Fig:LipUniE12} for their initial structures.
\end{exmp}

  \begin{figure}[tbhp]
  \subfigure[$E_1$]{
    \begin{minipage}[t]{0.45\linewidth}
      \begin{center}
           {\includegraphics[height=4.0cm]{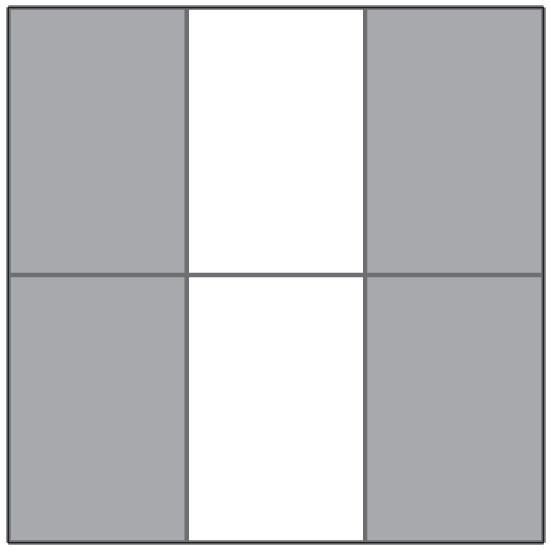}}
      \end{center}
  \end{minipage}
  }
  \subfigure[$E_2$]{
    \begin{minipage}[t]{0.45\linewidth}
       \begin{center}
           {\includegraphics[height=4.0cm]{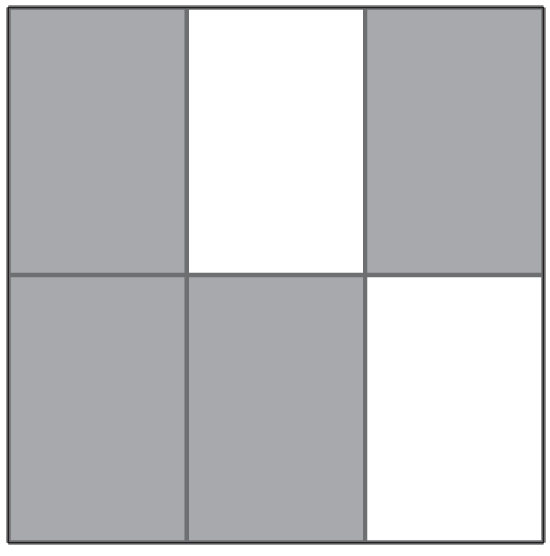}}
       \end{center}
    \end{minipage}
  }
  \caption{Initial structures of $E_1$ and $E_2$ in Example~\ref{ex:LipUniFibre}.}
  \label{Fig:LipUniE12}
 \end{figure}

Things become more interesting and also more complicated if we require that neither of them has uniform fibres. Fortunately, in this case, we can still answer Question~\ref{ques:1} positively.



Let $n=m^2$. Let $\mathcal{D}_1, \mathcal{D}_2$ be subsets of $\{0,\dots,n-1\}\times
\{0,\dots,m-1\}$ with $1<\#\D_1<\#\D_2<mn$, and satisfying the  following four conditions:
\begin{enumerate}
  \item $M_1\cdot \# \D_1 = M_2 \cdot \# \D_2$ and $M_1=m$, where $M_1$ and $M_2$ are defined as in \eqref{eq:M1M2def};
  \item $\sum_{j=0}^{m-1} \sqrt{M_{j,1}} = \sum_{j=0}^{m-1} \sqrt{M_{j,2}}$ , where
  \[
    M_{j,k} = \#\{i:\, (i,j)\in \D_k\}, \quad 0\leq j\leq m-1, \; k=1,2;
  \]
  \item $n\cdot \max_{0\leq j\leq m-1} M_{j,1}=(M_2)^2 \cdot \max_{0\leq j\leq m-1} M_{j,2}$;
  \item $n\cdot \min_{0\leq j\leq m-1} M_{j,1}=(M_2)^2 \cdot \min_{0\leq j\leq m-1} M_{j,2}$.
\end{enumerate}
Let $E_i=E(n,m,\D_i)$, $i=1,2$. Write $N_i=\# \D_i$, $i=1,2$. Then
\begin{align*}
  \bdim E_1=\frac{\log N_1-\log M_1}{\log n} + \frac{\log M_1}{\log m}  = \frac{\log (M_1N_1) }{2\log m}  = \frac{\log (M_2N_2) }{2\log m} =\bdim E_2.
\end{align*}
Furthermore, from Theorem~2.1 in \cite{Fraser20},
\begin{align*}
  &\dim_{\mathrm H} E_1 = \log_m \Big( \sum_{j=0}^{m-1} \sqrt{M_{j,1}}\Big)=\log_m \Big( \sum_{j=0}^{m-1} \sqrt{M_{j,2}}\Big)=\dim_{\mathrm H} E_2,\\
   & \dim_{\mathrm A} E_1=1+\frac{\log \Big(\max\limits_{0\leq j\leq m-1} M_{j,1} \Big) }{\log n} = \frac{\log M_2 }{\log m} + \frac{\log \Big(\max\limits_{0\leq j\leq m-1} M_{j,2} \Big) }{\log n}=\dim_{\mathrm A} E_2, \\
  & \dim_{\mathrm L} E_1=1+\frac{\log \Big(\min\limits_{0\leq j\leq m-1} M_{j,1} \Big) }{\log n} = \frac{\log M_2 }{\log m} + \frac{\log \Big(\min\limits_{0\leq j\leq m-1} M_{j,2} \Big) }{\log n}=\dim_{\mathrm L} E_2.
\end{align*}
Assume that both $E_1$ and $E_2$ are normal and contain infinitely many components. Then from Theorem~\ref{Thm2}, $g_k(E_1)\asymp k^{-\log n/\log N_1}$ and $g_k(E_2)\asymp k^{-1/\bdim E_2}$. Since
\[
  \bdim E_2 = \bdim E_1 >\frac{\log N_1}{\log n},
\]
$\{g_k(E_1)\}$ and $\{g_k(E_2)\}$ are not comparable.

\begin{exmp}\label{ex:Lipschitz}
  Let $m=4$, $n=16$, and let
  \begin{align*}
    M_{0,1}=M_{1,1}>M_{2,1}=M_{3,1}>0, \quad M_{0,2}>M_{3,2}>M_{1,2}=M_{2,2}=0
  \end{align*}
  with $M_{0,1}=4M_{0,2}$ and $M_{3,1}=4M_{3,2}$. Then it is easy to check that the four conditions to ensure the same dimensions are all satisfied.
  Thus, if we let $E_1=E(16,4,\D_1)$ and $E_2=E(16,4,\D_2)$ with
  \begin{align*}
    &\D_1=\{(i,j):\, i=0,4,8,12;j=0,1\} \cup \{(i,j):\, i=0,15;j=2,3\}, \\
    &\D_2=\{(i,0):\, 0\leq i\leq 15\} \cup \{(i,3):\, 0\leq i\leq 3, \text{or } 12\leq i\leq 15 \}.
  \end{align*}
  Then $\dim E_1=\dim E_2$ for the box, Hausdorff, Assouad and lower dimensions, while their gap sequences are not comparable. See Figure~\ref{Fig:LipE12} for their initial structures.

  Similarly, if we let $E_3=E(16,4,\D_3)$ and $E_4=E(16,4,\D_4)$  with
  \begin{align*}
    &\D_3=\{(0,0),(15,0),(0,1),(15,1),(8,2),(8,3)\}, \\
    &\D_4=\{(i,0):\, 0\leq i\leq 3, \text{or } 12\leq i\leq 15\} \cup \{(i,3):\, i=0,1,14,15\}.
  \end{align*}
  Then $\dim E_3=\dim E_4$ for the box, Hausdorff, Assouad and lower dimensions, while their gap sequences are not comparable. See Figure~\ref{Fig:LipE34} for their initial structures.

\end{exmp}



  \begin{figure}[tbhp]
  \subfigure[$E_1$]{
    \begin{minipage}[t]{0.45\linewidth}
      \begin{center}
           {\includegraphics[height=4.0cm]{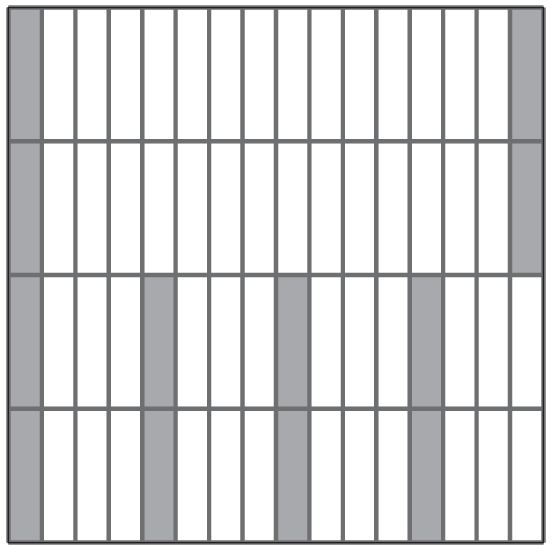}}
      \end{center}
  \end{minipage}
  }
  \subfigure[$E_2$]{
    \begin{minipage}[t]{0.45\linewidth}
       \begin{center}
           {\includegraphics[height=4.0cm]{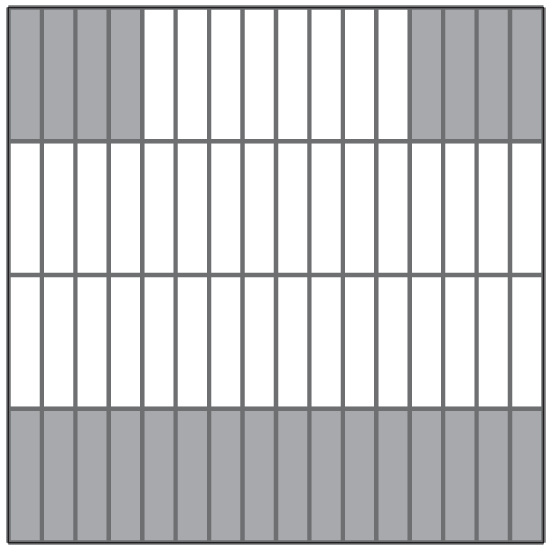}}
       \end{center}
    \end{minipage}
  }
  \caption{Initial structures of $E_1$ and $E_2$ in Example~\ref{ex:Lipschitz}.}
  \label{Fig:LipE12}
 \end{figure}

  \begin{figure}[tbhp]
  \subfigure[$E_3$]{
    \begin{minipage}[t]{0.45\linewidth}
      \begin{center}
           {\includegraphics[height=4.0cm]{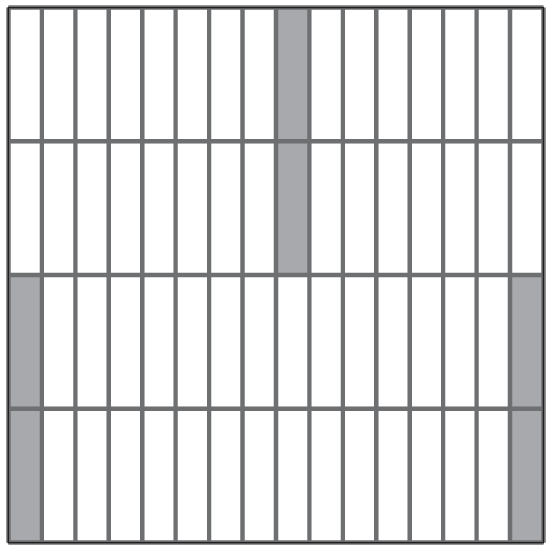}}
      \end{center}
  \end{minipage}
  }
  \subfigure[$E_4$]{
    \begin{minipage}[t]{0.45\linewidth}
       \begin{center}
           {\includegraphics[height=4.0cm]{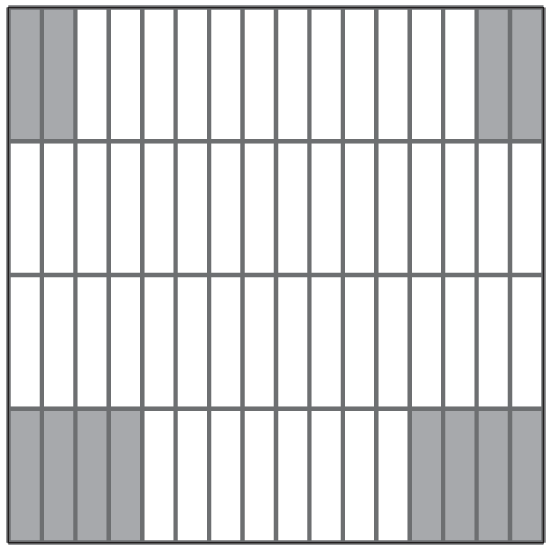}}
       \end{center}
    \end{minipage}
  }
  \caption{Initial structures of $E_3$ and $E_4$  in Example~\ref{ex:Lipschitz}.}
  \label{Fig:LipE34}
 \end{figure}

\section{Further remarks}

\subsection{The component separation condition and related conditions}

In this paper, we introduce two conditions for Bedford-McMullen sets: the CSC and the ERC. In Lemma~\ref{CSC-ERC}, we show that the CSC implies the ERC. Naturally we may ask whether the ERC also implies the CSC. It is easy to check that if $E$ is simple or $E$ contains finitely many components, then $E$ does not satisfy the ERC. Thus, it suffices to consider the case that $E$ is a normal Bedford-McMullen set with infinitely many components. In Section~4, we show that in this case, $E$ always satisfies the CSC.  Thus, the ERC also implies the CSC so that they are equivalent, and both of them are equivalent to the condition that $E$ is a normal Bedford-McMullen set with infinitely many components.

We  compare the CSC with other familiar separation conditions. Let $E$ be a Bedford-McMullen set. Then $E$ always satisfies the open set condition because of the grid structure. If $E$ satisfies the strong separation condition, then $E$ is totally disconnected. Notice that a totally disconnected Bedford-McMullen set is always normal and contains infinitely many components. Thus both the strong separation condition and the totally disconnected condition are stronger than the CSC.

\subsection{Bedford-McMullen sets with finitely many components}
In \cite{Xiao}, Xiao completely characterized fractal squares with finitely many components. Since  the  methods in \cite{Xiao} are topological, the results are applicable to Bedford-McMullen sets. In particular, Xiao proved that if a fractal square $F$ contains only finitely many components, then every component of $F$ intersects the boundary of $Q_0$, see \cite[Proposition~3.3]{Xiao}. Applying this result to Bedford-McMullen sets, and using Lemma~\ref{Thm}, we immediately have the following interesting result.
\begin{prop}
  Let $E$ be a normal Bedford-McMullen set which is not one-sided. Then $E$ contains only finitely many components if and only if every component of $E$ intersects the boundary of $Q_0$.
\end{prop}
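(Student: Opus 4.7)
The proposition is an equivalence, and pleasingly each direction is handed to us by a result already in hand. My plan is therefore to dispose of the two implications separately and simply cite the machinery we have.

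For the ``$\Leftarrow$'' direction, suppose every component of $E$ meets the boundary of $Q_0$. Since $E$ is nonlinear and not one-sided, Lemma~\ref{Thm} applies verbatim and yields that $E$ has only finitely many components. This direction is essentially immediate, needing no further argument beyond invoking the lemma.

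For the ``$\Rightarrow$'' direction, suppose $E$ contains only finitely many components. Here I will quote Xiao's result, \cite[Proposition~3.3]{Xiao}, which asserts that if a fractal square has only finitely many components, then every component must intersect the boundary of the unit square $Q_0$. As noted in the paragraph preceding the proposition, Xiao's argument is purely topological: it uses only the grid structure of the $Q_k$, the monotonicity of components under inclusion, and the self-similar replacement property of the construction. All these ingredients are equally available for a general Bedford-McMullen set (with the rectangular contractions $\Psi_w$ replacing the square ones), so the proof carries over without modification. Applying it to our $E$ then gives the desired conclusion.

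Combining the two directions yields the claimed equivalence. I do not expect any serious obstacle: the hard work has already been done, with the forward implication packaged into Lemma~\ref{Thm} (whose proof occupied most of Section~4) and the reverse implication absorbed by citing \cite{Xiao}. The only thing worth being slightly careful about is the remark that Xiao's topological argument transfers from the self-similar (fractal square) setting to the self-affine (Bedford-McMullen) setting; but since the rectangles $\Psi_w(Q_0)$ still tile $Q_k$ along an $n\times m$ grid and inclusion/component relations are preserved under the affine maps $\Psi_w$, no new input is required.
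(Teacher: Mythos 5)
Your proposal is correct and follows exactly the paper's own argument: the ``$\Leftarrow$'' direction is Lemma~\ref{Thm}, and the ``$\Rightarrow$'' direction is Xiao's \cite[Proposition~3.3]{Xiao}, which transfers to Bedford--McMullen sets because the argument is purely topological. Nothing further is needed.
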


\section*{Acknowledgments}
The authors are  grateful to the referees for their valuable
comments. The proof of Propostion~\ref{prop1} is suggested by the referees. They also wish to thank Prof. Kenneth
Falconer and Dr. Jian-Ci Xiao for their helpful comments. The Example~\ref{ex:Lipschitz} are based on the discussions with Dr. Xiao.

\end{document}